\newtheorem{thm}{Theorem}[section]  
\newtheorem{lem}[thm]{Lemma}
\theoremstyle{definition}  
\theoremstyle{remark}  
\def\beq{\begin{eqnarray}}  
\def\eeq{\end{eqnarray}}  
\def\bsp{\begin{split}}  
\def\esp{\end{split}}
\def\d{\mathrm{d}}
\def\beq{\begin{eqnarray}}
\def\eeq{\end{eqnarray}}
\def\bem{\begin{matrix}}
\def\eem{\end{matrix}}
\def\bsp{\begin{split}}
\def\esp{\end{split}}
\def\d{\mathrm{d}}
\newcommand{\be}{\begin{equation}}
\newcommand{\ee}{\end{equation}}
\newcommand{\hnu}{\hat{\nu}}
\def \hbm #1 {\mbox{\boldmath{$\hat m^{(#1)}$}}}
\def \bm #1 {\mbox{\boldmath{$m^{(#1)}$}}}
\def \BDM {\begin{displaymath}}
\def \EDM {\end{displaymath}}
\newcommand{\SigMinLam}{(\sigma-\lambda)}
\newcommand{\SigPluLam}{(\sigma+\lambda)}
\newcommand{\PsiTwo}{{\Psi}_2}
\newcommand{\deltaPsiTwo}{\delta \PsiTwo}
\newcommand{\DeltaPsiTwo}{\Delta \PsiTwo}
\newcommand{\deltaSigMinLam}{\delta \SigMinLam}
\newcommand{\deltadeltaPsiTwo}{\delta \delta \PsiTwo}
\newcommand{\DeltadeltaPsiTwo}{\Delta \delta \PsiTwo}
\newcommand{\DeltaDeltaPsiTwo}{\Delta \Delta \PsiTwo}
\begin{document}   
   
\title{\Large\textbf{On scalar curvature invariants in three dimensional spacetimes}   
\thanks{The final publication is available at Springer via \url{http://dx.doi.org/10.1007/s10714-016-2022-9}.}
}

\author{N. K. Musoke$^{1,2}$ \and D. D. McNutt$^{1}$ \and A. A. Coley$^{1}$ \and D. A. Brooks$^{1}$}
\date{Received: date / Accepted: date}
\institute{
	$^{1}$Department of Mathematics and Statistics,\\   
	Dalhousie University,    
	Halifax, Nova Scotia,\\    
	Canada B3H 3J5 \\
	Tel:  (902) 494 - 2572 \\ 
	Fax: (902) 494 - 5130 
	\vspace{0.1cm} \\   
	$^{2}$Perimeter Institute For Theoretical Physics,\\
	Waterloo, Ontario\\
	Canada N2L 2Y5
	\vspace{0.1cm}\\ 
	\texttt{msknathan@dal.ca}, \texttt{ddmcnutt@dal.ca}, \texttt{aac@mathstat.dal.ca}, \texttt{dario.a.brooks@dal.ca}
}
\titlerunning{Scalar Curvature Invariants}
\authorrunning{Musoke, McNutt, Coley and Brooks}
\maketitle   
   
\begin{abstract}   

We wish to construct  a minimal
set of algebraically independent scalar curvature invariants formed by the contraction of the Riemann (Ricci) tensor and its
covariant derivatives up to some order of differentiation in three dimensional (3D)
Lorentzian spacetimes. 
In order to do this we utilize the 
Cartan-Karlhede equivalence algorithm since, in general, all
Cartan invariants are related to scalar polynomial curvature invariants.
As an example we apply the algorithm to the class of 3D {{Szekeres}} cosmological
spacetimes with comoving
dust and cosmological constant $\Lambda$.
In this case,
we find that there are at most twelve
algebraically independent Cartan invariants, including $\Lambda$.
We present these Cartan invariants, and we relate them 
to twelve independent  scalar polynomial curvature invariants
(two, four and six, respectively, zeroth, first, and second order  scalar polynomial curvature invariants).

\keywords{scalar curvature invariants \and three dimensions \and equivalence \and Cartan-Karlhede \and scalar polynomial curvature invariants \and algebraic independence}

\end{abstract}   

\section{Introduction}

Scalar polynomial curvature invariants are scalars constructed from contractions of the Riemann tensor and its covariant
derivatives.  
Scalar polynomial curvature invariants have been utilised in the study of degenerate (Kundt) spacetimes, such as 
vanishing scalar invariant (VSI) and constant  scalar invariant (CSI) spacetimes
\cite{CGHP,CSI,applicat}.

In four dimensions (4D), scalar polynomial curvature invariants have been studied due to their potential use in general relativity (GR) \cite{exactSol,MacCallum:2015zaa}.  Scalar curvature invariants can be used to study the inequivalence of spacetimes  \cite{exactSol} and can be used to describe the physical properties of models in an invariant way \cite{AbdelqaderLake2015,PageShoom2015}. In addition they are utilized in higher order Lagrangians to generalize GR \cite{FKWC1992}. 

Any $\mathcal{I}$-non-degenerate spacetime
\cite{inv} 
can be completely characterized by scalar curvature invariants.  This leads to the
natural problem of attempting to find a basis for the scalar curvature invariants formed from the Riemann tensor
(up to some order of covariant differentiation).  We shall attempt to exploit the fact that the Cartan invariants give a complete local characterization of a manifold, and hence all scalar polynomial invariants depend on them. Therefore, in principle, we can utilize the Cartan invariants to construct a minimal set of algebraically independent scalar curvature invariants (including covariant derivatives of the curvature).

Primarily, the scalar curvature invariants formed from the Riemann tensor $R_{abcd}$ only (contractions involving products of
the undifferentiated Riemann tensor only, the so called \emph{algebraic invariants}) have been investigated in
4D Lorentzian spacetimes \cite{zak}.
Much less work has gone into studying curvature 
invariants formed using covariant derivatives of the Riemann tensor (differential invariants)
\cite{FKWC1992}.  We note that the independent Cartan invariants  at all orders of differentiation for spacetimes were specified in \cite{MacCallumAman1986}.

We are interested in the three-dimensional (3D) Lorentzian case \cite{Coley:2014goa}, 
in which the Riemann tensor reduces to the Ricci tensor (i.e., the Weyl tensor 
vanishes).
The main motivation for studying this problem in 3D
is mathematical, although there are some possible applications to
mathematical physics \cite{AlievNutku,Hall}. In particular, it may be applied to the initial value problem in GR; in this case a unique solution is given by a 3-metric and the 3D extrinsic curvature specified on a Cauchy hypersurface. Hence, an important step is to characterize the hypersurface using scalar curvature invariants. Related to this is the problem of matching across a 3D timelike hypersurface in 4D spacetime using invariants to obtain physically interesting exact solutions; for example, an attempt to match a material body (Vaidya metric) and an external vacuum (Kerr metric) across a 3D surface using both Cartan invariants and the Ricci scalar was made in \cite{Cox}. 

Our ultimate goal is to provide a basis of 
all such scalar curvature invariants in 3D  \cite{Coley:2014goa}. That is, we are interested in constructing  a minimal
set of algebraically independent scalar curvature invariants formed by 
the contraction of the Riemann (Ricci) tensor and its
covariant derivatives up to some maximum order of differentiation \footnote{Typically the maximum order of differentiation is written as $q=p+1$ where $p$ denotes the iteration of the CKA where the number of functionally independent invariants reaches a maximum and the dimension of the isotropy group of the $n-th$ covariant derivative of the curvature tensor for $n> p$  reaches its minimum.}  $q$ in 3D
Lorentzian spacetimes. 
There is a  general case (for which 
$q=1$) in which there are 18 algebraically independent invariants. In order to do this we exploit the 
Cartan-Karlhede  algorithm (CKA) for determining the equivalence
of manifolds since, in general, all
Cartan invariants are determined by scalar polynomial curvature invariants  \cite{inv}.

We apply the CKA to the class of 3D {{Szekeres}}
spacetimes which represent exact cosmological solutions with a comoving
dust and a cosmological constant $\Lambda$ \cite{BarrowEtAl}. In this particular example, 
$q=2$ and the invariant count, which records the number of functionally independent invariants at each iteration, is $(1,3,3)$. 
We find that there are at most twelve
algebraically independent Cartan invariants (including $\Lambda$) and present such a set. 
Then we relate these Cartan invariants to twelve independent polynomial scalar curvature invariants;
the two zeroth  order  polynomial scalar curvature invariants
$R, R_{ab}R^{ab}$, and
four and six, first and second order  scalar polynomial curvature invariants, respectively.

\subsection{Bounds on number of covariant derivatives in 3D}

We are interested in constructing  a minimal
set of algebraically independent scalar curvature invariants formed by 
the contraction of the Riemann (Ricci) tensor and its
covariant derivatives up to some maximum order of differentiation $q$ in 3D
Lorentzian spacetimes. 
The question of the bound (maximal value) for $q$ in 3D was addressed in \cite{Sousa}. 
In 3D spacetimes, the Weyl tensor vanishes and the canonical frame of the 
Karlhede algorithm \cite{karlhede} is aligned with principal directions of the Ricci tensor rather than the Weyl tensor. 
All spacetimes have $N_0\leq 1$ where $N_0$ is the dimension of the automorphism
group of the curvature tensor, hence we have $q \leq 5$.
{\footnote{The question of the maximal order of covariant derivative required for
the invariant classification of a $n$-dimensional pseudo-Riemannian manifold
$(\eta, M)$ is relevant in determining the worst case scenarios for
implementing the Cartan-Karlhede equivalence algorithm.  Cartan established the theoretical
upper-bound to be $q \leq n(n+1)/2$; that is, the dimension of $O(\eta,
M)$.  Karlhede improved the estimation of the upper bound to be $q \leq
n+N_0+1 < \dim O(\eta, M)$.  For the 3D Lorentzian
manifolds, this implies the upper-bound is at most five as $\dim O(\eta, M)=
\dim O(1,2)=6$.}}

It is common to discuss the Segre types of the traceless Ricci tensor \cite{exactSol}. The
case $\{(1,11)\}$ with full 3D Lorentz subgroup corresponding to spacetimes of constant curvature is of
Ricci type O.  There is a 2D isotropy group of the Lorentz group spanned by a boost and a null rotation.  The 2D
subgroups arise as special cases of Segre types $\{(1,1),1\}$ and $\{(2,1)\}$ 
with 1D isotropy group (with an additional isotropy).
However, these resulting spacetimes must be degenerate Kundt (i.e., not $\mathcal{I}$-non-degenerate), 
and are not considered further here. Suppose that the Karlhede algorithm takes five steps 
to complete (i.e., $q=4$). 
At most two of the parameters on the fibre bundle were fixed at the beginning, and so 
then the undetermined part of the Lorentz group has $N_0\geq 1$. We see that this 
argument does not improve the bound in 3D as it does in the 4D case \cite{Coley:2014goa}.

If the isotropy group is one dimensional, then $q$ can be 5. 
It was shown in \cite{MW2013} that
this bound is sharp (i.e., $q=5$ is attained; that is, a 3D Lorentzian manifold exists for
which the fifth covariant derivative of the Riemann tensor is required to
classify the spacetime invariantly).
However, in  the general cases that do not have a one dimensional
isotropy group (and are not degenerate Kundt), 
$q = p+1 \leq 4$. Thus, in these cases the
classification is simpler.

The cumulative number of algebraically independent scalars that can be constructed from the
Riemann tensor and its covariant derivatives up to order $m = n-2$ with $n \ge 2$ is given by the
Thomas formula $N(D,n)$ in $D$ dimensions \cite{exactSol,Thomas},
\begin{equation}
	N(D,n) 
	=
	\frac{D(D+1)(D+n)!}{2 D! n!}
	- \frac{(D+n+1)!}{(D-1)!(n+1)!}
	+ n
	.
	\label{}
\end{equation}
In 3D, we note that $N(3,0) = 3, N(3,1) = 18, N(3,2) = 45, N(3,3) = 87$, $N(3,4) = 147$, and $N(3,5) = 228$. 
The number of invariants gained for each successive differentiation is given by
\begin{equation}
	N(3,n)-N(3,n-1)
	=
	\frac{3}{2} (n+2)(n-1)
	,
	\label{}
\end{equation}
in analogy to the result $(n+1)(n+4)(n+5)$ known in 4D \cite{MacCallumAman1986,siklos1984relativistic}.
We can study 3D spacetimes of different algebraic types on a case by case basis.  It may be possible to lower the bounds on $q$ in some algebraic cases.
For example, there is a  general case in which $q=1$, in which case  $N(3,1) = 18$.

\subsection{The Cartan Karlhede equivalence algorithm in 3D}
 
For $\mathcal{I}$-non-degenerate spacetimes, all
Cartan invariants (obtained from the CKA) are determined by scalar polynomial curvature invariants  \cite{inv}.
Hence the equivalence method suggests a possible approach to determine a basis of 
all scalar curvature invariants.

The equivalence problem in 3D was first considered in \cite{Sousa}.
The maximal order of covariant derivative required for the 
invariant classification of a 3D Lorentzian pseudo-Riemannian manifold ($q \leq 5$ \cite{MW2013}) is relevant
in determining the worst case scenarios for implementing the equivalence algorithm.
It may be possible to provide a minimal basis on a case by case basis; some 
cases may have a smaller $q$ and will then have fewer algebraically independent scalar invariants.
The results have been applied to the equivalence of 3D
Godel-type spacetimes \cite{Sousa,McNutt}) and Segre type $\{(1,1),1\}$  spacetimes with no isotropy group \cite{Coley:2014goa}.

In the particular case of the (general) P-type I 3D spacetime with invariant
count (3,3), we have that $q=1$. At zeroth order there are 
3 independent scalar invariants (after using the Lorentz freedom). There are
eighteen scalar invariants involving the first covariant derivative, 
of which only 15 are algebraically
independent due to the Bianchi identities \cite{MW2013}. 
Therefore, in order to classify these spacetimes we only need at most (after choosing which three terms in the Bianchi identities are algebraically
dependent), 3+15 = 18 Cartan invariants .
The bases for scalar 
Riemann polynomials of order eight or less in the derivatives of the 
metric tensor and for tensorial Riemann polynomials of order six or 
less were displayed in~\cite{FKWC1992}.

\subsection{Newman-Penrose formalism in 3D}

In \cite{MW2013}, Milson and Wylleman (MW hereafter) introduced a 3D analogue to the Newman-Penrose
formalism in 4D by considering the 3D manifold
as geodesically embedded in a 4D Lorentzian manifold. It is possible to project the Ricci identities, Bianchi identities,
commutator relations and the effect of a change of basis on the spin
coefficients and curvature components to produce the 3D
equivalent, where instead of complex-valued spinors we work with
real-valued spinors.

As a simple byproduct of this formalism we may
refine the algebraic classification of the trace-free Ricci tensor and
relate the algebraic type to the zeroth order isotropy group of the
equivalence algorithm; i.e.,  the automorphism group of the Riemann tensor.
By doing this it was shown that the dimension of the zeroth order isotropy group $H_0$ is at most one, except in the case of the curvature Type {\bf O} spacetimes, which may be ignored
as these are locally homogeneous spacetimes.

Using this information MW identified those curvature types for which
the algorithm potentially requires five iterations by requiring that the
components of the traceless Ricci tensor $S_{ab}$ are constant and that the dimension of the isotropy
group is bounded by $0<\text{dim}H_0<3$.  This set consists of those
spacetimes of curvature type ${\bf D}, {\bf DZ}$ and ${\bf N}$ \footnote{These types correspond to the Segre type $\{(11),1\}, \{1(1,1)\}$ and $ \{(21)\}$ for the traceless Ricci tensors respectively.} with
constant coefficients for the components of the traceless Ricci tensor $S_{ab}$.  A spacetime for which the curvature
tensor and the $k$-th order covariant derivatives of the curvature tensor
have constant coefficients is \emph{curvature homogeneous of order $k$}
$(CH_k)$.  By examining the $CH_k$, $k>0$ conditions MW produce
conditions for a single family of spacetimes with the requirement that one
must compute the fifth covariant derivative of the curvature tensor in the
equivalence algorithm. In particular, they showed that \cite{MW2013}:
{\em{The order of a curvature-regular, 3D Lorentzian manifold is bounded by 
$q-1\leq 4$, that this bound is sharp, and every 4th order metric is locally isometric to
\begin{eqnarray} 
	2 (2Tx du + dw)^2 - 2 du(dx + adu); 
	a \equiv \frac{1-e^{4Tw}}{2T}+ (2T^2-C)(x-\delta_C)^2 + F(u) \nonumber 
\end{eqnarray}
\noindent where $x,u,w$ are local coordinates, $C,T$ are real constants such that 
$C+2T^2 \neq0$, $\delta_C = 1$ if $C=0$ and $\delta_C =0$ if $C =0$
and $F(u)$ is an arbitrary real function such that 
$(1+2TF(u))F'' \neq 3 T (F')^2,~ if C \neq 0; F' \neq 0,~if C =1$.
}}

\section{3D Szekeres spacetimes}

The 3D Szekeres spacetimes represent exact cosmological solutions with a comoving
dust and a cosmological constant $\Lambda$ \cite{BarrowEtAl}.
We shall apply the CKA to these {{Szekeres}} 3D
spacetimes, in which we may use the algebraic independence of the Cartan
invariants to determine the minimal number of algebraically
independent scalar curvature invariants. We identify the Cartan invariants that are
algebraically independent, and in the following section then relate them to  a set of algebraically
independent
polynomial scalar
curvature invariants.

In this example, the Segre Type of the traceless Ricci tensor is $\{ 1(1,1)\}$, 
or equivalently of `Petrov'-type DZ \cite{MW2013},
$q=2$ and the invariant count (which records the number of functionally independent invariants at each order) is 
(1,3,3) (where $\dim H_0=1$, $\dim H_1=0$, $\dim H_2=0$\footnote{Here we denote the isotropy group of the $n$-th covariant derivative of the curvature tensor as $H_n$.}).
We shall follow the notation of \cite{MW2013}.
We must first fix all frame freedom to ensure
that the coframe is invariant.

\subsection{The coframe and its dual frame; the connection coefficients}

We use the metric given by Barrow et al. \cite{BarrowEtAl} with the restrictions they specify for the 3D Szekeres case:
\begin{equation}
	\label{GenMetric}
	\d s^2
	=
	- \d t^2
	+ R^2(x,y,t) e^{\hnu(x,y)} \d x^2
	+ \frac{(R_{,y} + \hnu_{,y})^2}{S^2(y)F_{,y}^2	} \d y^2,
\end{equation}
where $R(t,y)$ and $\hat{\nu}(x,y)$ take the forms 
\begin{eqnarray} 
R(t,y) &=&  chL(t) + F(y)shL(t) = \cosh(\sqrt{\Lambda} t) + F(y) \frac{\sinh(\sqrt{\Lambda} t)}{\sqrt{\Lambda}}, \nonumber 
\\
\hat{\nu}(x,y) &=& - \ln(A(y) x^2 + 2B(y)x + C(y)), \nonumber 
\end{eqnarray}
and $\Lambda$ is a constant, $A(y), B(y) ,C(y), F(y)$ and $S(y)$ are arbitrary functions of one variable, 
up to the condition that $F_{,y} \neq 0$. Note that in order to avoid notational conflict with the spinor formalism of \cite{MW2013}, we use $\hnu$, $\hat{\rho}$ and $\hat{\kappa}$ where \cite{BarrowEtAl} uses $\nu$, $\rho$ and $\kappa$, respectively.

Using the formalism in \cite{MW2013} for the Szekeres metric 
\eqref{GenMetric} the coframe $\{\theta^a\}$ may be written in compact form:
\begin{eqnarray} -n_{~\mu} = & \theta^0_{~\mu} &= \left(dt - R e^{\hnu} dx\right),\nonumber \\
m_{~\mu}=& \theta^1_{~\mu}& =  \sqrt{2}\left(\frac{(R_{,y} + R \hnu_{,y})dy}{F_{,y}S}\right),  \nonumber \\
-\ell_{~\mu} = & \theta^2_{~\mu}& = \left(dt + R e^{\hnu} dx\right).  \nonumber \end{eqnarray}
The non-zero connection coefficients are 
{\footnote{We will choose to write the spin coefficients only 
in terms of $\gamma$, $\sigma-\lambda$, $\pi$, $\sigma+\lambda$.}}

\begin{eqnarray} 
& \kappa= \pi = -\tau = -\nu = \frac{F_{,y} S}{4R},~~\epsilon = -\gamma = \frac{R_{,t}}{4R}, & \nonumber \\
& \frac{\sigma-\lambda}{2} = \frac14 \frac{R_{,ty} + R_{,t} \hat{\nu}_{,y}}{R_{,y} + R \hat{\nu}_{,y}},~~ \frac{\sigma+ \lambda}{2} = - \frac14 \frac{e^{-\hat{\nu}} \hat{\nu}_{,xy}}{R_{,y} + R \hat{\nu}_{,y}}. & \nonumber 
\end{eqnarray}
Note that the frame derivatives $D = \ell^a \nabla_a$ and $\Delta = n^a \nabla_a$ are equal when applied to any scalar function not depending on $x$, such as $\gamma$. 

The commutator relations for the Szekeres spacetime are then 

\begin{eqnarray} & D \delta - \delta D = - \tau[ D -  \Delta ] + 2 \sigma \delta, & \label{Comm1} \\
& D \Delta - \Delta D = - 2 \gamma [ D -  \Delta ],  & \label{Comm2} \\
& \delta \Delta - \Delta \delta = - \tau[ D -  \Delta ] + 2 \lambda  \delta & \label{Comm3} \end{eqnarray}

\noindent The Segre Type of the traceless Ricci tensor is then $\{ 1(1,1)\}$, 
or equivalently of `Petrov'-type DZ \cite{MW2013}, since the non-zero components of the trace-free Ricci tensor are $\Psi_0 = \Psi_4 = 3\Psi_2$. These trace-free Ricci components are equivalent to the energy density $\hat{\kappa} \hat{\rho}$ of Barrow et al, whose form is given in the Szekeres case by their equations (87)-(88) with the restriction $F_{,x} = 0$:
\begin{eqnarray}
	S_{11} = \Psi_2 =    \frac{1}{12} \hat{\kappa} \hat{\rho} = \frac{ E(x,y; \Lambda)}{12R(R_{,y} + R \hat{\nu}_{,y})}, \nonumber 
\end{eqnarray}
where $\dot{R} = u^a R_{,a}$ for the given time-like vector $u^a$ and
\begin{align}
 E(x,y; \Lambda) &=& e^{-2\hnu} \left[ \hnu_{,xy} \hnu_{,x} - \hnu_{,xxy} \right] + \frac12 e^{-2\hnu}(e^{2\hnu}[\dot{R}^2 - \Lambda R^2 - 2 (S F_{,y})^2 ])_{,y} \nonumber 
 \\ 
&=& e^{-2\hnu} \left[ \hnu_{,xy} \hnu_{,x} - \hnu_{,xxy} \right] + \frac12 e^{-2\hnu}(e^{2\hnu}[- \Lambda + F^2 - 2 (S F_{,y})^2 ])_{,y}. 
\label{BarrowE}
\end{align}
The Ricci scalar is algebraically dependent on $\PsiTwo$  in the following manner: 
\begin{equation}
	R = 12 \Psi_2 + 3 \Lambda. \label{SzekereCurvCond}
\end{equation}

At zeroth order the frame is fixed up to null rotations and boosts; however it is invariant under rotations of the two spatial coordinates, given by equations (151)-(152) in \cite{MW2013}  with the corresponding effect on the curvature components and connection coefficients in (153)-(161).

\subsection{ The 0-th iteration of the  CKA: Segre types for the Szekeres Spacetimes}

\begin{lem}  \label{lem:ZerothOrder}
There are no 3D Szekeres spacetimes with $\Psi_2$ constant. 
\end{lem}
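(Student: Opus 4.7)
The plan is to derive a contradiction from the assumption that $\PsiTwo\equiv k$ is constant, exploiting the explicit $t$-dependence of $R(t,y)$ together with the Szekeres constraint $F_{,y}\neq 0$.

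First I would dispose of the degenerate case $k=0$. Since $\Psi_0=\Psi_4=3\PsiTwo$, vanishing of $\PsiTwo$ forces the entire traceless Ricci tensor to vanish, so the spacetime has constant curvature (Segre type $\{(1,11)\}$, `Petrov' type \textbf{O}), which lies outside the Szekeres DZ class considered here.

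For $k\neq 0$, I would cross-multiply in the formula $\PsiTwo = E(x,y;\Lambda)/[12R(R_{,y}+R\hnu_{,y})]$ to obtain
\[
 E(x,y;\Lambda) \;=\; 12k\,R\bigl(R_{,y}+R\hnu_{,y}\bigr) \;=\; 6k\,(R^2)_{,y} + 12k\,\hnu_{,y}\,R^2.
\]
The left-hand side is manifestly $t$-independent by \eqref{BarrowE}, while the right-hand side depends on $t$ only through $R^2$. Substituting the explicit form of $R$ and applying double-angle identities, $R^2$ expands as an affine combination of $1$, $\cosh(2\sqrt{\Lambda}t)$, $\sinh(2\sqrt{\Lambda}t)$ with coefficients polynomial in $F(y)$ and $\Lambda$. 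Linear independence of these three functions of $t$ forces the coefficients of $\cosh(2\sqrt{\Lambda}t)$ and $\sinh(2\sqrt{\Lambda}t)$ on the right-hand side to vanish separately, producing two relations between $F_{,y}$, $F$, $\hnu_{,y}$ and $\Lambda$, both linear in $\hnu_{,y}$. Eliminating $\hnu_{,y}$ I expect to obtain an identity of the form $F_{,y}(F^2-\Lambda)=0$. The Szekeres condition $F_{,y}\neq 0$ then forces $F^2\equiv\Lambda$, i.e.\ $F$ constant, which contradicts $F_{,y}\neq 0$.

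The cases $\Lambda=0$ (where $R=1+F(y)\,t$) and $\Lambda<0$ (where hyperbolic functions are replaced by trigonometric ones) can be handled by the same structural argument: expand the right-hand side in the appropriate linearly independent $t$-modes and match coefficients. No real obstacle arises; the computation is routine coefficient matching, and the only conceptual subtlety is the separate treatment of $k=0$ above, which is why the restriction to the DZ Szekeres class is essential to the statement.
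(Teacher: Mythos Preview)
Your argument is correct and takes a genuinely different route from the paper's. The paper stays inside the Newman--Penrose formalism: assuming $\PsiTwo$ constant (and nonzero), the single nontrivial Bianchi identity \eqref{Bianchi} forces $\sigma-\lambda=-2\gamma=2\epsilon$; equating the explicit spin-coefficient expressions $\tfrac{\sigma-\lambda}{2}=\tfrac14\frac{R_{,ty}+R_{,t}\hnu_{,y}}{R_{,y}+R\hnu_{,y}}$ and $\epsilon=\tfrac{R_{,t}}{4R}$ then reduces to $RR_{,ty}=R_{,t}R_{,y}$, i.e.\ $R_{,y}=C(y)R$, and plugging in the explicit form of $R$ forces $C\equiv0$ and hence $F_{,y}=0$. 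Your approach bypasses the NP machinery entirely, working directly with the formula $\PsiTwo=E/[12R(R_{,y}+R\hnu_{,y})]$ and the known $t$-dependence of $R^2$; it is more elementary and self-contained, at the cost of a little explicit coefficient matching and a case split on the sign of $\Lambda$. The paper's route fits more naturally into the surrounding framework (the spin coefficients and Bianchi identity are already tabulated for later use) and is insensitive to the particular hyperbolic form of $R$ until the very last step. One small point in your favour: the paper tacitly assumes $\PsiTwo\neq0$ when dividing through in the Bianchi relation, whereas you dispose of the $k=0$ case explicitly by noting it falls outside the DZ class.
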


\begin{proof}

The Bianchi identities with $\Psi_2$ assumed to be constant, produce the following condition: 

\begin{equation}  3\Psi_2[2\gamma - \lambda + \sigma] = 0. \nonumber \end{equation}

\noindent From these equations it follows that $\sigma - \lambda = - 2\gamma = 2 \epsilon$. Equating $\sigma-\lambda$ with $2\epsilon$ we find that either $R_{,y} = 0$ or we find the following differential constraint: $C(y) R = R_{,y}$. However, for the particular form of $R$ this implies $C(y) = 0$ which produces the same result $R_{,y} = 0$ if and only if $F_{,y} = 0$ which cannot happen; thus $\Psi_2$ is non-constant. 
\end{proof}

The first five steps of the CKA for the $q$th iteration are then:
\begin{enumerate}
\item Calculate the set $I_q$, i.e. the derivatives of the curvature up to the $q$th order, i.e. the Ricci tensor. 
\item Fix the frame as much as possible by putting the elements of $I_q$ into canonical forms.
\item Find the frame freedom given by the isotropy group $H_q$ of transformations 
which leave invariant the normal form 
\footnote{What we are calling the normal forms of the Ricci tensor, Sousa et al. \cite{Sousa} 
 refer to as canonical forms.} of $I_q$.
\item Find the number $t_0$ of functionally independent functions of spacetime coordinates in the elements
of $I_q$, brought into the normal forms.
\item If the isotropy group $H_q$ is the same as 
$H_{q - 1}$ and the number of functionally independent functions $t_q$ is equal to $t_{q-1}$ , 
then let $q = p + 1$ and stop. [For $q=0$ there are no previous steps and we must always set $q=1$ and start 
on the 1st iteration of the CKA.]
\end{enumerate}

\noindent By studying the effect of the members of the Lorentz group
$SO(1,2)$ which do change the form of the metric, we may use these
frame transformations to fix the normal frame and determine all possible
Segre types.

\subsection{The 1st iteration of the CKA}

The next iteration of the CKA requires that we must compute the covariant derivative of the Ricci tensor 
in normal form and the scalar $\Lambda$. It is here where the connection coefficients (or spin coefficients 
if spinors are used) are introduced as potential invariants. This can be seen by using the formula 
for the covariant derivative:
\begin{eqnarray} R_{ab;c} = \nabla_c R_{ab}= R_{ab,c} - R_{db}\Gamma^d_{~ca} - R_{ad} \Gamma^d_{~cb},
\label{CovRic} \end{eqnarray}
\noindent introducing the following extended Cartan invariants at first order \footnote{We will call invariants constructed from combinations of Cartan invariants of different orders \emph{extended Cartan invariants}.}:

\begin{eqnarray} 3\Delta \Psi_2 + 12 \gamma \Psi_2,~\delta \Psi_2,~ 3 D \Psi_2 + 12 \gamma \Psi_2,~ (\sigma-\lambda) \Psi_2. \nonumber \end{eqnarray}

\noindent However, there is a problem: the connection coefficients used to compute the Ricci tensor prior to any frame transformations may be different from the coframe used to produce the normal form of the Ricci tensor.

We proceed to the next iteration of the algorithm. The metric coframe is not fully an invariant coframe, and so one must be careful with taking frame derivatives of invariants until an invariant coframe is discovered. To entirely fix the frame,  we consider the transformation rules for the spin-coefficients under an element of $SO(2)$:

\begin{eqnarray} (\ell+n)' = LHS,~~(\ell-n \pm 2im)' = e^{\mp it}(\ell-n \pm 2im) \nonumber\end{eqnarray}

\noindent where $LHS$ denotes the left-hand side of the equation without primes. This produces the following transformation rules for the spin coefficients and curvature components:

\begin{eqnarray} &(\gamma + \sigma - \epsilon - \lambda)' = LHS,& \nonumber \\ 
&(4\alpha+\kappa-\pi+\nu-\tau)' = LHS,& \nonumber \\
&(2(\gamma + \epsilon) \pm i(\kappa-\pi + \tau - \nu))' = e^{\pm i t} LHS, & \nonumber \\
&(4\alpha + \pi - \kappa + \tau - \nu + \pm 2o(\epsilon-\gamma+\sigma-\lambda))' = e^{\pm 2 it} LHS,& \nonumber \\
&(\lambda + \sigma - \gamma - \epsilon + \pm i(\pi - \tau))' = e^{\pm it}(LHS - \delta t \mp \frac{i}{2} (Dt - \Delta t)),& \nonumber \\
&(\kappa + \pi + \nu + \tau)' = LHS-(Dt + \Delta t),& \nonumber \\
&(\Psi_0 + 2\Psi_2 + \Psi_4)' = LHS, & \nonumber \\
&(\Psi_0 - \Psi_4 \pm 2i(\Psi_1 + \Psi_3))' = e^{\mp it} LHS, \nonumber \\
&(\Psi_0 - 6\Psi_2 + \Psi_4 \pm 4i (\Psi_1 - \Psi_3))' = e^{\mp 2 it} LHS,& \nonumber \end{eqnarray}

\noindent where $LHS$ denotes the left-hand side of the equation without primes. Substituting the non-zero spin-coefficients and Curvature components  \cite{MW2013}, we find the simpler transformation rules 

\begin{eqnarray} &(2\gamma + \sigma - \lambda)' = LHS,& \nonumber \\ 
&(4\alpha+\kappa-\pi+\nu-\tau)' = 0,& \nonumber \\
&(2(\gamma + \epsilon) \pm i(\kappa-\pi + \tau - \nu))' = 0, & \nonumber \\
&(4\alpha + \pi - \kappa + \tau - \nu + \pm 2i(\epsilon-\gamma+\sigma-\lambda))' = e^{\pm 2 it} (\sigma-\lambda),& \nonumber \\
&(\lambda + \sigma - \gamma - \epsilon + \pm i(\pi - \tau))' = e^{\pm it}(\lambda+\sigma \pm i (2\pi) - \delta t \mp \frac{i}{2} (Dt - \Delta t)),& \nonumber \\
&(\kappa + \pi + \nu + \tau)' = -(Dt + \Delta t),& \nonumber \\
&(\Psi_0 + 2\Psi_2 + \Psi_4)' = 0, & \nonumber \\
&(\Psi_0 - \Psi_4 \pm 2i(\Psi_1 + \Psi_3))' = 0, \nonumber \\
&(\Psi_0 - 6\Psi_2 + \Psi_4 \pm 4i (\Psi_1 - \Psi_3))' = 0.& \nonumber \end{eqnarray}

\noindent From these equations it is clear that $t=0$ produces the smallest set of first order  invariants. 

Thus at first order the extended Cartan invariants are $D \Psi_2, \delta \Psi_2, \Delta \Psi_2, \gamma$ and $\sigma- \lambda$.  Due to the simplifications for the spin-coefficients there is only one independent Bianchi identity: 

\begin{eqnarray}
	D\Psi_2 + \Delta \Psi_2 = 2 [2\gamma + (\sigma-\lambda)] \Psi_2 \label{Bianchi} 
\end{eqnarray} 

\noindent With this identity, we may write one of the first order invariants as an algebraic combination of the remaining first order invariants and the zeroth order invariants. While this choice is arbitrary, we have chosen to eliminate $D \Psi_2$ as an algebraically independent invariant.
 
At first order there are only five (apart from $\Lambda$) algebraically independent extended Cartan invariants ($\Psi_2$ at zeroth order, plus $\gamma$, $\sigma-\lambda$, $\delta\Psi_2$ and $\Delta\Psi_2$ at first order).  
The number of functionally independent invariants is at least two and at most three.  
We will assume that we are in the general case with invariant count (1,3,3)  to provide a complete example of the equivalence algorithm in practice.

\subsection{The 2nd iteration of the CKA}

At second order the remaining non-zero invariants in terms of spin-coefficients appear, i.e. $\sigma + \lambda$ and $\pi$,  along with the twelve frame derivatives of the first order invariants.
Six of these are the frame derivatives of the spin-coefficients $\gamma$ and $\sigma-\lambda$.
The non-trivial Ricci identities (equations (107)-(115) of \cite{MW2013}) allow for five of these to be written as algebraic expressions of the other invariants:
\begin{eqnarray} 
D \sigma + \delta \tau &=& 2\tau^2 + 2(-\gamma + \sigma)\sigma + \frac32 \Psi_2, \label{RicciId0}
\\
D \tau + \Delta \tau &=& 4 \gamma \tau, \label{RicciId1}
\\
\delta \gamma &=& -2\gamma \pi + (\lambda-\sigma)\tau, \label{RicciId2}
\\
\delta \tau - \Delta \sigma &=& 2(\lambda - \gamma )\sigma + 2\tau^2 + \frac32 \Psi_2 + \frac{\Lambda}{4}, \label{RicciId3}
\\
D\gamma + \Delta \gamma &=& 4 \gamma^2 - \frac{\Lambda}{4}, \label{RicciId4}
\\
D \lambda + \delta \tau &=& 2(\sigma + \gamma )\lambda + 2 \tau^2 + \frac32 \Psi_2 +\frac{\Lambda}{4}, \label{RicciId5}
\\
\delta \tau - \Delta \lambda &=& 2\tau^2 + 2 (\gamma + \lambda)\lambda + \frac32 \Psi_2. \label{RicciId8}
\end{eqnarray}
Combining the pairs of equations [\eqref{RicciId0}, \eqref{RicciId5}] and [\eqref{RicciId3}, \eqref{RicciId8}] we produce the equations for $D(\sigma-\lambda)$ and $\Delta(\sigma-\lambda)$. 
Equation \eqref{RicciId2} gives us $\delta\gamma$ and equation \eqref{RicciId4} gives us $D\gamma$ in terms of $\Delta \gamma$.
However, as $\gamma$ lacks any $x$-dependence, $\Delta \gamma = D \gamma$ and so these may be eliminated 
as well. 
Thus, of the six frame derivatives of the first order invariants $\gamma$ and $\sigma-\lambda$, only $\delta(\sigma-\lambda)$ is independent.

Taking the frame derivatives of the non-zero Bianchi identity, \eqref{Bianchi}, and applying the commutator relations, \eqref{Comm1}-\eqref{Comm3},  we may reduce the number of algebraically independent invariants at second order to six.
To see this, if we continue with the choice of $D\Psi_2$ being the algebraically dependent invariant at first order, we may eliminate $D^2 \Psi_2, D\delta \Psi_2$, $D\Delta \Psi_2$ and $\delta\Delta\Psi_2$
Thus at second order, the six remaining algebraically independent invariants are: $\pi, \sigma+\lambda$, $\delta(\sigma-\lambda), \delta^2 \Psi_2, \Delta^2 \Psi_2$ and $\Delta \delta \Psi_2$. 

The algorithm ends at second order as the number of functionally
independent invariants and the dimension of the isotropy group is unchanged
from the previous iteration.  This implies that there are at most twelve
algebraically independent extended Cartan invariants including $\Lambda$. Of course the full
classification also includes all of the discrete information (e.g, the invariant count),
and the relationships between the Cartan invariants and all of the vanishing invariants.

\section{Scalar polynomial invariants}
\label{sec:PolynomialInvariants}

Next we relate the above Cartan invariants to some of the polynomial scalar curvature invariants.
At zeroth and first order, we use the overdetermined basis of scalar curvature invariants from section 3 of \cite{Coley:2014goa}.
At second order, we look at invariants from~\cite{FKWC1992}.

In the previous sections, the Cartan invariants were calculated.
Some of these are algebraically dependent. 
The details of the algebraic dependences of the Cartan invariants are described more fully in the following subsections as we eliminate the independent invariants from the Ricci tensor and its covariant derivatives.
For clarity, our choice of algebraically independent Cartan invariants is summarized here. 
\\
0th order: $\Psi_2$ ($\Lambda$)
\\
1st order: $\gamma$, $\sigma-\lambda$, $\delta\Psi_2$, $\Delta\Psi_2$
\\
2nd order: $\pi$, $\sigma+\lambda$, $\delta(\sigma-\lambda)$, $\delta\delta\Psi_2$, $\Delta\Delta\Psi_2$, $\Delta\delta\Psi_2$

\subsection{0th order}
\label{sec:ZerothOrderPolynomialInvariants}

The Ricci Tensor is given as a function of the 0th order Cartan invariant $\Psi_2$ and the constant $\Lambda$ above:
\begin{align}
	R_{ij} =&\; S_{ij} + \frac{R}{3} g_{ij}
	\\
	=&\; 3 \Psi_2 \; \ell_i \ell_j 
	- \left( \frac{\Psi_2}{3} + 2 \Lambda \right) \; \ell_{(i} n_{j)}
	+ \left( \frac{5 \Psi_2}{3} + \Lambda \right) \; m_{i} m_{j}
	+ 3 \Psi_2 \; m_i m_j. 
\end{align}
where $S_{ij}$ is the tracefree Ricci tensor.
From here we may calculate some zeroth order polynomial scalar invariants. 
For example,
\begin{align}
	R = R^a_{~a}
	&= 
	12\,{ \PsiTwo}+3\,\Lambda
	\\
	R^{ab} R_{ab} &=
	3 \left( 24 \Psi_2^2 + 8 \Psi_2 \Lambda + \Lambda^2 \right).
	\label{eqn:ZerothOrderPSCIs}
\end{align}
Note that these two zeroth order invariants are polynomials in the zeroth order Cartan invariants.
This suggests an equivalence between these two polynomial scalar curvature invariants and the 0th order Cartan invariants $\Psi_2$ and $\Lambda$.
Indeed,
\begin{align}
	{\Psi_2}^2 &= \frac{3 R^{ab} R_{ab} - R^2}{72},
	\\
	\Lambda &= \frac{R}{3} - 4 \Psi_2.
\end{align}

\subsection{1st order}

The covariant derivative of the Ricci tensor can be calculated. 
Performing this derivative introduces both spin-coefficients (connection coefficients) and frame derivatives of the components at zeroth order (i.e. frame derivatives of $\Psi_2$).

As mentioned above, the Bianchi identity (11) (derived from $R^\mu_{\eta;\mu} = \frac{1}{2} R_{;\eta}$ or equations (116)-(118) of \cite{MW2013}) gives frame derivatives of $\Psi_2$ as combinations of the zeroth and first order invariants.
In particular,
\begin{align}
	D \Psi_2 &= 2 \Psi_2 (\sigma-\lambda) + 4 \gamma \Psi_2 - \Delta \Psi_2,
\end{align}
can be used to eliminate $D \PsiTwo$ from the components $R_{ij;k}$ of the first covariant derivative of the Ricci tensor. Thus its components can be expressed as polynomials in the algebraically independent 0th and 1st order Cartan invariants.
\begin{align}
	R_{ij;k} 
	&=
	\big( 24 \gamma \Psi_2 + 6 \Psi_2 \SigMinLam - 3 \DeltaPsiTwo \big) 
	\; \ell_{(i} \ell_{j)} \ell_k
	+
	\big( 3 \deltaPsiTwo \big) 
	\; \ell_{(i} \ell_{j)} m_k
	\\&\quad\nonumber
	+
	\big( -12 \gamma \Psi_2 + 3 \DeltaPsiTwo \big) 
	\; \ell_{(i} \ell_{j)} n_k
	+
	\big( 6 \Psi_2 \SigMinLam \big) 
	\ell_{(i} m_{j)} m_k 
	\\&\quad\nonumber
	+
	\big( -12 \Psi_2 \SigMinLam - 24 \gamma \Psi_2 + 6 \DeltaPsiTwo \big) 
	\; \ell_{(i} n_{j)} \ell_k 
	+
	\big( -6 \deltaPsiTwo \big) 
	\; \ell_{(i} n_{j)} m_k 
	\\&\quad\nonumber
	+
	\big( -6 \DeltaPsiTwo \big) 
	\; \ell_{(i} n_{j)} n_k 
	+
	\big( 12 \gamma \Psi_2 + 6 \Psi_2 \SigMinLam - 3 \DeltaPsiTwo \big) 
	\; m_{(i} m_{j)} \ell_k
	\\&\quad\nonumber
	+
	\big( 3 \deltaPsiTwo \big)
	\; m_{(i} m_{j)} m_k
	+
	\big( 3 \DeltaPsiTwo \big)
	\; m_{(i} m_{j)} n_k
	\\&\quad\nonumber
	+
	\big( 6 \Psi_2 \SigMinLam \big) 
	\; m_{(i} m_{j)} \ell_k  
	+ 
	\big( 6 \Psi_2 \SigMinLam - 3 \DeltaPsiTwo \big)
	\; n_{(i} n_{j)} \ell_k
	\\&\quad\nonumber
	+ 
	\big( 3 \deltaPsiTwo \big)
	\; n_{(i} n_{j)} m_k
	+ 
	\big( 12 \gamma \Psi_2 + 3 \DeltaPsiTwo \big)
	\; n_{(i} n_{j)} m_k
\end{align}

This allows us to calculate some 1st order polynomial scalar invariants which include the extended Cartan invariants appearing at 1st order.
These invariants are the 4-7th and 9th from the list in \cite{Coley:2014goa}.
\begin{dmath}
	R^{;a} R_{;a} =
-1152\,{ \DeltaPsiTwo}\,{ \PsiTwo}
\,{ \gamma}-576\,{ \DeltaPsiTwo}\,
{ \PsiTwo}\,{ \SigMinLam}+288\,{{
 \DeltaPsiTwo}}^{2}+288\,{{ 
\deltaPsiTwo}}^{2}
\end{dmath}
\begin{dmath}
	R^{bc;d} R_{bc;d} =
-576\,{{ \PsiTwo}}^{2}{{ \gamma}}^
{2}-144\,{{ \PsiTwo}}^{2}{{ 
\SigMinLam}}^{2}-576\,{ \DeltaPsiTwo}
\,{ \PsiTwo}\,{ \gamma}-288\,{ 
\DeltaPsiTwo}\,{ \PsiTwo}\,{ 
\SigMinLam}+144\,{{ \DeltaPsiTwo}}^{2
}+144\,{{ \deltaPsiTwo}}^{2}
\end{dmath}
\begin{dmath}
	R^{bc;d} R_{bd;c} =
-576\,{{ \PsiTwo}}^{2}{{ \gamma}}^
{2}-288\,{ \gamma}\,{{ \PsiTwo}}^{
2}{ \SigMinLam}-144\,{{ \PsiTwo}}^
{2}{{ \SigMinLam}}^{2}-288\,{ 
\DeltaPsiTwo}\,{ \PsiTwo}\,{ 
\gamma}-144\,{ \DeltaPsiTwo}\,{ 
\PsiTwo}\,{ \SigMinLam}+72\,{{ 
\DeltaPsiTwo}}^{2}+72\,{{ 
\deltaPsiTwo}}^{2}
\end{dmath}
\begin{dmath}
	R^{;c} R_c^{~e;f} R_{;e} R_{;f} =
-1327104\,{{ \PsiTwo}}^{4}{{ 
\gamma}}^{4}-1990656\,{{ \PsiTwo}}^{4
}{ \SigMinLam}\,{{ \gamma}}^{3}-
995328\,{{ \PsiTwo}}^{4}{{ 
\SigMinLam}}^{2}{{ \gamma}}^{2}-
165888\,{{ \PsiTwo}}^{4}{{ 
\SigMinLam}}^{3}{ \gamma}+663552\,{
 \DeltaPsiTwo}\,{{ \PsiTwo}}^{3}{{
 \gamma}}^{3}+331776\,{ 
\DeltaPsiTwo}\,{{ \PsiTwo}}^{3}{ 
\SigMinLam}\,{{ \gamma}}^{2}-165888\,
{ \DeltaPsiTwo}\,{{ \PsiTwo}}^{3}{
{ \SigMinLam}}^{2}{ \gamma}-82944
\,{ \DeltaPsiTwo}\,{{ \PsiTwo}}^{3
}{{ \SigMinLam}}^{3}+497664\,{{ 
\DeltaPsiTwo}}^{2}{{ \PsiTwo}}^{2}{{
 \gamma}}^{2}+165888\,{{ \PsiTwo}}
^{2}{{ \gamma}}^{2}{{ \deltaPsiTwo
}}^{2}+663552\,{{ \DeltaPsiTwo}}^{2}{{ 
\PsiTwo}}^{2}{ \SigMinLam}\,{ 
\gamma}+207360\,{{ \DeltaPsiTwo}}^{2}
{{ \PsiTwo}}^{2}{{ \SigMinLam}}^{2
}-41472\,{{ \PsiTwo}}^{2}{{ 
\SigMinLam}}^{2}{{ \deltaPsiTwo}}^{2}
-331776\,{{ \DeltaPsiTwo}}^{3}{ 
\PsiTwo}\,{ \gamma}-331776\,{ 
\DeltaPsiTwo}\,{ \PsiTwo}\,{ 
\gamma}\,{{ \deltaPsiTwo}}^{2}-165888
\,{{ \DeltaPsiTwo}}^{3}{ \PsiTwo}
\,{ \SigMinLam}-165888\,{ 
\DeltaPsiTwo}\,{ \PsiTwo}\,{ 
\SigMinLam}\,{{ \deltaPsiTwo}}^{2}+
41472\,{{ \DeltaPsiTwo}}^{4}+82944\,{{ 
\DeltaPsiTwo}}^{2}{{ \deltaPsiTwo}}^{
2}+41472\,{{ \deltaPsiTwo}}^{4}
\end{dmath}
\begin{dmath}
	R^{;c} R_c^{~e;f} R_{e~;f}^{~h} R_{;h} =
-663552\,{{ \PsiTwo}}^{4}{{ \gamma
}}^{4}-995328\,{{ \PsiTwo}}^{4}{ 
\SigMinLam}\,{{ \gamma}}^{3}-414720\,
{{ \PsiTwo}}^{4}{{ \SigMinLam}}^{2
}{{ \gamma}}^{2}+20736\,{{ \PsiTwo
}}^{4}{{ \SigMinLam}}^{4}+663552\,{ 
\DeltaPsiTwo}\,{{ \PsiTwo}}^{3}{{ 
\gamma}}^{3}+331776\,{ \DeltaPsiTwo}
\,{{ \PsiTwo}}^{3}{ \SigMinLam}\,{
{ \gamma}}^{2}-82944\,{ 
\DeltaPsiTwo}\,{{ \PsiTwo}}^{3}{{ 
\SigMinLam}}^{2}{ \gamma}-41472\,{
 \DeltaPsiTwo}\,{{ \PsiTwo}}^{3}{{
 \SigMinLam}}^{3}+165888\,{{ 
\DeltaPsiTwo}}^{2}{{ \PsiTwo}}^{2}{{
 \gamma}}^{2}+82944\,{{ \PsiTwo}}^
{2}{{ \gamma}}^{2}{{ \deltaPsiTwo}
}^{2}+331776\,{{ \DeltaPsiTwo}}^{2}{{ 
\PsiTwo}}^{2}{ \SigMinLam}\,{ 
\gamma}+103680\,{{ \DeltaPsiTwo}}^{2}
{{ \PsiTwo}}^{2}{{ \SigMinLam}}^{2
}-41472\,{{ \PsiTwo}}^{2}{{ 
\SigMinLam}}^{2}{{ \deltaPsiTwo}}^{2}
-165888\,{{ \DeltaPsiTwo}}^{3}{ 
\PsiTwo}\,{ \gamma}-165888\,{ 
\DeltaPsiTwo}\,{ \PsiTwo}\,{ 
\gamma}\,{{ \deltaPsiTwo}}^{2}-82944
\,{{ \DeltaPsiTwo}}^{3}{ \PsiTwo}
\,{ \SigMinLam}-82944\,{ 
\DeltaPsiTwo}\,{ \PsiTwo}\,{ 
\SigMinLam}\,{{ \deltaPsiTwo}}^{2}+
20736\,{{ \DeltaPsiTwo}}^{4}+41472\,{{ 
\DeltaPsiTwo}}^{2}{{ \deltaPsiTwo}}^{
2}+20736\,{{ \deltaPsiTwo}}^{4}
\end{dmath}
These invariants are polynomials in the six 0th and 1st order extended  Cartan invariants $\Psi_2$, $\Lambda$, $\sigma-\lambda$, $\gamma$, $\Delta\Psi_2$, and $\delta\Psi_2$.
The 0th order invariants $\Psi_2$ and $\Lambda$ have already been expressed in terms of the scalar curvature invariants, so of these six, four have not previously been expressed in such a manner.
However, we do find that taking the algebraic combinations 
\begin{dmath}
	I_{R1} {:=} {\frac {R^{bc;d} R_{bc;d}}{144}} - {\frac {R^{;a} R_{;a}}{288}} 
	=
-4\,{{ \PsiTwo}}^{2}{{ \gamma}}^{2
}-{{ \PsiTwo}}^{2}{{ \SigMinLam}}^
{2}
\end{dmath}
\begin{dmath}
	I_{R2} {:=} 
	{\frac {R^{bc;d} R_{bd;c}}{72}} + {\frac {R^{;a} R_{;a}}{288}} - {\frac {R^{bc;d} R_{bc;d}}{72}}
	=
-4\,{ \gamma}\,{{ \PsiTwo}}^{2}{
 \SigMinLam}
\end{dmath}
we can write manageable expressions relating $\gamma$ and $\sigma-\lambda$ to scalar invariants.
Then the two combinations
\begin{dmath}
	\frac{R^{;a} R_{;a}}{288} =
-4\,{ \DeltaPsiTwo}\,{ \PsiTwo}\,{
 \gamma}-2\,{ \DeltaPsiTwo}\,{ 
\PsiTwo}\,{ \SigMinLam}+{{ 
\DeltaPsiTwo}}^{2}+{{ \deltaPsiTwo}}^
{2}
	\label{}
\end{dmath}
and
\begin{dmath}
	\frac{1}{41472}
	\bigl(
		R^{;c} R_c^{~e;f} R_{;e} R_{;f} 
		-2 
		R^{;c} R_c^{~e;f} R_{e~;f}^{~h} R_{;h} 
	\bigr)
	+
	\frac{I_{R2}^2}{4}
	=
-4\,{{ \PsiTwo}}^{4}{{ \SigMinLam}
}^{3}{ \gamma}-{{ \PsiTwo}}^{4}{{
 \SigMinLam}}^{4}-16\,{ 
\DeltaPsiTwo}\,{{ \PsiTwo}}^{3}{{ 
\gamma}}^{3}-8\,{ \DeltaPsiTwo}\,{{
 \PsiTwo}}^{3}{ \SigMinLam}\,{{
 \gamma}}^{2}+4\,{{ \DeltaPsiTwo}}
^{2}{{ \PsiTwo}}^{2}{{ \gamma}}^{2
}+{{ \PsiTwo}}^{2}{{ \SigMinLam}}^
{2}{{ \deltaPsiTwo}}^{2}
\end{dmath}
are two algebraically independent expressions containing $\delta\Psi_2$ and $\Delta\Psi_2$.

This gives us a correspondence between the 0th and 1st order extended Cartan invariants and scalar curvature invariants selected from \cite{Coley:2014goa}.

\subsection{2nd order}

We would hope to be able to do something similar with the 2nd order invariants.
If possible, the result would be a set of PSCIs sufficient for classification of the space.

As noted in section 2.5 above, the Ricci identities can be used to write five of the six frame derivatives of the spin coefficients appearing at 1st order as algebraic combinations of the invariants $\Lambda$, $\Psi_2$, $\gamma$, $\sigma-\lambda$, $\pi$, $\sigma+\lambda$, and $\delta(\sigma-\lambda)$.
Then the explicit relations described after \eqref{RicciId8} are
\begin{align}
	D(\sigma-\lambda) 
	&=
	2(-\gamma+\sigma)\sigma-2(\sigma+\gamma)\lambda-\frac{\Lambda}{4},
	\\
	\Delta(\sigma-\lambda) 
	&=
	-2(\lambda-\gamma)\sigma+2(\gamma+\lambda)\lambda-\frac{\Lambda}{4},
	\\
	\delta\gamma 
	&=
	(-2\gamma+\sigma-\lambda)\pi,
	\\
	D\gamma = \Delta\gamma = \frac{D\gamma+\Delta\gamma}{2}
	&=
	2 \gamma^2 - \frac{\Lambda}{8}.
\end{align}
We must also make use of frame derivatives of the Bianchi identity \eqref{} and the commutation relationships \eqref{Comm1}-\eqref{Comm3} to eliminate $D^2 \Psi_2$, $D \delta \Psi_2$, $D \Delta \Psi_2$, and $\delta \Delta \Psi_2$.
This allows the second covariant derivative of the Ricci tensor to be written as a function of the independent invariants $\Lambda$, $\Psi_2$, $\gamma$, $\sigma-\lambda$, $\delta\Psi_2$, $\Delta\Psi_2$, $\pi$, $\sigma+\lambda$, and $\delta(\sigma-\lambda)$, $\delta^2 \Psi_2$, $\Delta^2 \Psi_2$ and $\Delta\delta \Psi_2$ only.
Each one of the expected invariants does in fact appear in the explicit expression.


We now desire some set of polynomial scalar curvature invariants which is
equivalent to the second order extended Cartan invariants.  In order to determine
these, we consult \cite{FKWC1992} for lists of (generally) independent
invariants constructed from second derivatives of the Riemann and Ricci
tensors.  In the notation of \cite{FKWC1992}, ${\cal R}^r_{s,q}$ denotes
the space of Riemann polynomials of rank $r$ (number of free indices),
order $s$ (number of differentiations of the metric tensor) and degree $q$
(number of factors $\nabla^p R_{\dots}^{\dots}$).  The notation ${\cal
R}^r_{\lbrace{\lambda_1 \dots \rbrace}}$ denotes the space of Riemann
polynomials of rank $r$ spanned by contractions of products of the form
$\nabla^{\lambda_1}R_{\dots}^{\dots}$.  

As we are interested in scalars, we consider the sets with rank $r=0$, contained in \cite{FKWC1992} Appendix~B.
After some consideration, of the sets presented there, only $\mathcal{R}^0_{4,1}$, $\mathcal{R}^0_{\{2,0\}}$, $\mathcal{R}^0_{\{2,2\}}$ and $\mathcal{R}^0_{\{2,0,0\}}$ are useful when comparing with second order extended Cartan invariants.
Of these, we may also eliminate those with minimal dimension greater than 3 and those which are not constructed from 2nd derivatives of the Riemann tensor.
There is the possibility that invariants higher order in the metric are needed, but they are not included in FKWC.
For convenience, we label the FKWC invariants by $I_{s,\alpha,i}$, where $s$ is the order, $\alpha=a,b,c,\dots$ enumerates the set within that order, and $i=1,2,3\dots$ enumerates the invariant within that set.

Beginning with $\mathcal{R}^0_{4,1}$, we have 
\begin{dmath}
	I_{4,a,1}
	=
	\Box R
	=
144\,{ \PsiTwo}\,{ \gamma}\,{ 
\SigMinLam}-144\,{ \PsiTwo}\,{ 
\gamma}\,{ \SigPluLam}-24\,{ 
\PsiTwo}\,{{ \SigMinLam}}^{2}+24\,{
 \PsiTwo}\,{ \SigMinLam}\,{ 
\SigPluLam}+24\,\Lambda\,{ \PsiTwo}-
144\,{ \gamma}\,{ \DeltaPsiTwo}+48
\,{ \pi}\,{ \deltaPsiTwo}-48\,{
 \SigMinLam}\,{ \DeltaPsiTwo}+24\,
{ \DeltaPsiTwo}\,{ \SigPluLam}+24
\,{ \DeltaDeltaPsiTwo}+24\,{ 
\deltadeltaPsiTwo}
	\label{}
\end{dmath}
which contains four of the second order extended Cartan invariants: $\pi$, $\sigma+\lambda$, $\Delta\Delta\Psi_2$, and $\delta\delta\Psi_2$.

Moving on to $\mathcal{R}^0_{\{2,0\}}$, we find that the invariants are longer expressions, although only in the same variables as $I_{a,4,1}$.
For example, the first is
\begin{dmath}
	I_{6,b,2}
	=
	R^{;pq} R_{pq}
	=
144\,\Lambda\,{ \PsiTwo}\,{ \gamma
}\,{ \SigMinLam}-144\,\Lambda\,{ 
\SigPluLam}\,{ \PsiTwo}\,{ 
\gamma}-24\,\Lambda\,{ \PsiTwo}\,{{
 \SigMinLam}}^{2}+24\,\Lambda\,{ 
\SigPluLam}\,{ \PsiTwo}\,{ 
\SigMinLam}+1152\,{{ \PsiTwo}}^{2}{{
 \gamma}}^{2}+1440\,{{ \PsiTwo}}^{
2}{ \gamma}\,{ \SigMinLam}-864\,{
 \SigPluLam}\,{{ \PsiTwo}}^{2}{
 \gamma}+144\,{{ \PsiTwo}}^{2}{{
 \SigMinLam}}^{2}+144\,{ 
\SigPluLam}\,{{ \PsiTwo}}^{2}{ 
\SigMinLam}+24\,{\Lambda}^{2}{ 
\PsiTwo}+72\,\Lambda\,{{ \PsiTwo}}^{2
}-144\,\Lambda\,{ \gamma}\,{ 
\DeltaPsiTwo}+48\,\Lambda\,{ 
\deltaPsiTwo}\,{ \pi}-48\,\Lambda\,{
 \SigMinLam}\,{ \DeltaPsiTwo}+24\,
\Lambda\,{ \SigPluLam}\,{ 
\DeltaPsiTwo}-864\,{ \PsiTwo}\,{ 
\gamma}\,{ \DeltaPsiTwo}+288\,{ 
\PsiTwo}\,{ \deltaPsiTwo}\,{ 
\pi}-288\,{ \PsiTwo}\,{ 
\SigMinLam}\,{ \DeltaPsiTwo}+144\,{
 \SigPluLam}\,{ \PsiTwo}\,{ 
\DeltaPsiTwo}+24\,\Lambda\,{ 
\DeltaDeltaPsiTwo}+24\,\Lambda\,{ 
\deltadeltaPsiTwo}+144\,{ \PsiTwo}\,{
 \DeltaDeltaPsiTwo}+144\,{ \PsiTwo
}\,{ \deltadeltaPsiTwo}
	\label{}
\end{dmath}
 which has 23 terms.
After some calculation, we find that each of the three candidates from this set are algebraically dependent on $I_{4,a,1}$.
In particular, we have
\begin{align}
	I_{6,b,2}
	&=
	I_{6,b,3}
	=
	\bigl( \Lambda+6\Psi_2 \bigr) I_{4,a,1}
	+ \text{polynomial in lower order invariants}
	\\
	I_{6,b,4}
	&=
	-\frac{1}{4} I_{6,b,2} - \frac{3}{2} \Psi_2 I_{4a1}
	+ \text{polynomial in lower order invariants}
	.
	\label{}
\end{align}
Note that it is permissible to multiply by lower order extended Cartan invariants as these were previously shown equivalent to PSCIs.
Additive terms in lower order extended Cartan invariants have no effect on our goal for the same reason.

We also find that the invariants in $\mathcal{R}^0_{\{2,0,0\}}$ are algebraically dependent on $I_{4,a,1}$.
The explicit dependence is given by 
\begin{align}
	I_{8,e,5}
	&=
	I_{8,e,6}
	=
	I_{8,e,7}
	\\
	&=
	2 \, I_{8,e,8}
	=
	\bigl( \Lambda + 6 \Psi_2)^2 I_{4a1}
	+ \text{polynomial in lower order invariants}
	\nonumber
	.
	\label{}
\end{align}
Note that the specific lower order terms appearing are different for each of the above equalities.

Now the invariants in $\mathcal{R}^0_{\{2,2\}}$ are all that remain of those explicitly presented in FKWC.
The expressions for these invariants are much longer again; see Appendix~\ref{sec:ExampleSecondOrderInvariants} for some explicit expressions.
We find that of the candidates $I_{8,d,2}$ to $I_{8,d,7}$, some are algebraically dependent.
In particular, we have
\begin{align}
	I_{8,d,6}
	&=
	3\Lambda\Psi_2 I_{4a1} + \frac{1}{2} I_{8d3} - \frac{1}{2} I_{8d4} + \frac{1}{2} I_{8d5}
	+ \text{lower order invariants}
	\\
	I_{8,d,7}
	&=
	I_{8d3} - \frac{1}{4} {I_{4a1}}^2
	+ \text{lower order invariants}
	\label{}
\end{align}
Now there are only five remaining candidates for algebraically independent invariants at second order: 1 from $\mathcal{R}^0_{4,1}$ and 4 from $\mathcal{R}^0_{\{2,2\}}$.
The surviving invariants from $\mathcal{R}^0_{\{2,2\}}$ are
\begin{align}
	I_{8,d,2}
	&=
	R^{;pq} R_{;pq}
	\\
	I_{8,d,3}
	&=
	R^{;pq} \Box R_{pq}
	\\
	I_{8,d,4}
	&=
	\Box R^{pq} \Box R_{pq}
	\\
	I_{8,d,5}
	&=
	R_{pq;rs} R^{pq;rs}
	.
	\label{}
\end{align}
However, as mentioned above, FKWC is not a complete classification; there is the possibility of invariants with derivatives higher than 8th order.
For example, we may use
\begin{align}
	I_{10} := 
	R^{;ac} R^{;b}_{\phantom{;b};c} R_{ab}
	\label{}
\end{align}
as another invariant.
It is 10th order, so does not explicitly appear in FKWC.
The name $I_{10}$ is assigned for convenience and consistency.
Another option would be to work with $\Box \Box R$. However, it is not expected to be much easier to work with as it has even more terms than $I_{10}$.

See Table~\ref{tab:SecondOrderInvs} for a list of the remaining candidate invariants and which of the extended Cartan invariants they contain.
It can already be seen from the dependences on different variables that at least three of these invariants are algebraically independent.

\begin{table}
	\centering
	\begin{tabular}{| l | l | c | c | c | c | c | c |}
		\hline
		& terms
		& $\pi$
		& $\sigma+\lambda$
		& $\delta(\sigma-\lambda)$
		& $\delta\delta\Psi_2$
		& $\Delta\Delta\Psi_2$
		& $\Delta\delta\Psi_2$
		\\
		\hline

		$I_{4,a,1}$
		& 11
		& \checkmark 
		& \checkmark
		&
		& \checkmark
		& \checkmark
		&
		\\
		\hline

		$I_{8,d,2}$
		& 65
		& \checkmark
		& \checkmark
		& \checkmark
		& \checkmark
		& \checkmark
		& \checkmark
		\\
		\hline

		$I_{8,d,3}$
		& 74
		& \checkmark
		& \checkmark
		& \checkmark
		& \checkmark
		& \checkmark
		&
		\\
		\hline

		$I_{8,d,4}$
		& 69
		& \checkmark
		& \checkmark
		& \checkmark
		& \checkmark
		& \checkmark
		&
		\\
		\hline

		$I_{8,d,5}$
		& 71
		& \checkmark
		& \checkmark
		& \checkmark
		& \checkmark
		& \checkmark
		& \checkmark
		\\
		\hline

		$I_{10}$
		& 138
		& \checkmark
		& \checkmark
		& \checkmark
		& \checkmark
		& \checkmark
		& \checkmark
		\\
		\hline

	\end{tabular}
	\caption{The polynomial scalar curvature invariants remaining after eliminating those that are clearly algebraically dependent. In the second column is the number of terms in each polynomial. In the remaining columns are indications as to whether the respective extended Cartan invariant appears in the polynomial invariant.}
	\label{tab:SecondOrderInvs}
\end{table}

It is possible to make algebraic combinations of these 6 invariants in such a way that such independence is clearer.
For example, see Table~\ref{tab:SecondOrderInvsSimp}.
After extensive consideration, it is believed that the six invariants presented in Table~\ref{tab:SecondOrderInvs} (or equivalently in Table~\ref{tab:SecondOrderInvsSimp}) are algebraically independent.
\begin{table}
	\centering
	\begin{tabular}{| l | l | c | c | c | c | c | c |}
		\hline
		& $\pi$
		& $\sigma+\lambda$
		& $\delta(\sigma-\lambda)$
		& $\delta\delta\Psi_2$
		& $\Delta\Delta\Psi_2$
		& $\Delta\delta\Psi_2$
		\\
		\hline

		$I_{4,a,1}$
		& \checkmark 
		& \checkmark
		&
		& \checkmark
		& \checkmark
		&
		\\
		\hline

		$I_{8,d,2}$
		& \checkmark
		& \checkmark
		& \checkmark
		& \checkmark
		& \checkmark
		& \checkmark
		\\
		\hline

		$I_{8,d,3}$
		& \checkmark
		& \checkmark
		& \checkmark
		& \checkmark
		& \checkmark
		&
		\\
		\hline

		$I_{8,d,4}$
		& \checkmark
		& \checkmark
		& \checkmark
		& \checkmark
		& \checkmark
		&
		\\
		\hline

%

		$I_{8,d,5} - \frac{1}{2} I_{8,d,2}$
		& \checkmark
		& \checkmark
		& \checkmark
		& \checkmark
		& \checkmark
		&
		\\
		\hline

		$ I_{10} - (\Lambda+6\Psi_2) I_{8d2}$
		& \checkmark
		& \checkmark
		& \checkmark
		&
		&
		&
		\\
		\hline

	\end{tabular}
	\caption{
		The dependence of some algebraic combinations of the polynomial scalar invariants presented in Table~\ref{tab:SecondOrderInvs} on the extended Cartan invariants appearing at second order.
	}
	\label{tab:SecondOrderInvsSimp}
\end{table}

\section{Conclusions}

We are interested in constructing  a minimal
set of algebraically independent scalar curvature invariants formed by the contraction of the Riemann (Ricci) tensor and its
covariant derivatives up to some maximum order of differentiation $q$ in 3D
Lorentzian spacetimes.  In order to do this we have exploited
the CKA since, in general (and in particular for the class of spacetimes considered here), all
Cartan invariants are determined by scalar polynomial curvature invariants  \cite{inv}.

We applied the CKA to the class of  3D {{Szekeres}}
spacetimes, which represent cosmological solutions with comoving
dust and cosmological constant $\Lambda$ \cite{BarrowEtAl}.
In this example, the Segre Type of the traceless Ricci tensor is then $\{ 1(1,1)\}$, 
or equivalently is of `Petrov'-type DZ \cite{MW2013},
$q=2$, and the invariant count is 
(1,3,3) (where $\dim H_0=1$, $\dim H_1=0$, $\dim H_2=0$).

We found that there are at most twelve
algebraically independent extended Cartan invariants including $\Lambda$. 
We computed and presented  our choice of algebraically independent extended Cartan invariants; namely,  
$\Psi_2$ ($\Lambda$) (0th order),
$\gamma$, $\sigma-\lambda$, $\delta\Psi_2$, $\Delta\Psi_2$ (1st order), and
$\pi$, $\sigma+\lambda$, $\delta(\sigma-\lambda)$, $\delta\delta\Psi_2$, $\Delta\Delta\Psi_2$, $\Delta\delta\Psi_2$
(2nd order).

Next we related the above extended Cartan invariants to twelve independent polynomial 
scalar curvature invariants \cite{Coley:2014goa}:
the two zeroth  order  polynomial scalar curvature invariants
$R, R_{ab}R^{ab}$, 
four first order  polynomial scalar curvature invariants (and especially 
$I_{R1}, I_{R2}$) given in subsection (3.2), and
the six second order  polynomial scalar curvature invariants
of subsection (3.3).

The CKA stops at second order, and since we have been
able to find the correspondence between the extended Cartan invariants and the
polynomial invariants presented, these polynomial invariants should be necessary and
sufficient for the classification of the 3D Szekeres spaces.  
That is, these twelve independent polynomial 
scalar curvature invariants, together with information on the
relationships between the invariants
and all of the vanishing invariants, constitute the basis of polynomial 
scalar curvature invariants of the class of 3D {{Szekeres}}
spacetimes. We 
illustrate the form of the invariants for a special case in Appendix B.

{\em Acknowledgements}. We would like to thank Malcolm MacCallum and 
Robert Milson for helpful comments.  
This research was supported by the Natural Sciences and
Engineering Research Council of Canada and by the Perimeter Institute
for Theoretical Physics. Research at
Perimeter Institute is supported by the Government of
Canada through Industry Canada and by the Province
of Ontario through the Ministry of Research and Innovation.

\appendix

\clearpage
\section{Second order invariants}
\label{sec:ExampleSecondOrderInvariants}

In this appendix we demonstrate some of the expressions for the second order polynomial scalar curvature invariants which were too long to include in the main text.

\allowdisplaybreaks

\begin{dmath}
	I_{8,d,2}
	=
	R^{;pq} R_{;pq}
	=
55296\,{{ \PsiTwo}}^{2}{{ \gamma}}
^{2}{{ \pi}}^{2}+6912\,{{ \PsiTwo}
}^{2}{{ \gamma}}^{2}{{ \SigMinLam}
}^{2}-13824\,{{ \PsiTwo}}^{2}{{ 
\gamma}}^{2}{ \SigMinLam}\,{ 
\SigPluLam}+6912\,{{ \PsiTwo}}^{2}{{
 \gamma}}^{2}{{ \SigPluLam}}^{2}+
18432\,{{ \PsiTwo}}^{2}{ \gamma}\,
{ \SigMinLam}\,{{ \pi}}^{2}-2304\,
{{ \PsiTwo}}^{2}{ \gamma}\,{{ 
\SigMinLam}}^{3}+4608\,{{ \PsiTwo}}^{
2}{ \gamma}\,{{ \SigMinLam}}^{2}{
 \SigPluLam}-2304\,{{ \PsiTwo}}^{2
}{ \gamma}\,{ \SigMinLam}\,{{ 
\SigPluLam}}^{2}-4608\,{{ \PsiTwo}}^{
2}{{ \SigMinLam}}^{2}{{ \pi}}^{2}+
1728\,{{ \PsiTwo}}^{2}{{ 
\SigMinLam}}^{4}-3456\,{{ \PsiTwo}}^{
2}{{ \SigMinLam}}^{3}{ \SigPluLam}
+1728\,{{ \PsiTwo}}^{2}{{ 
\SigMinLam}}^{2}{{ \SigPluLam}}^{2}+
2304\,\Lambda\,{ \gamma}\,{{ 
\PsiTwo}}^{2}{ \SigMinLam}-2304\,
\Lambda\,{ \gamma}\,{{ \PsiTwo}}^{
2}{ \SigPluLam}-1152\,\Lambda\,{{ 
\PsiTwo}}^{2}{{ \SigMinLam}}^{2}+1152
\,\Lambda\,{{ \PsiTwo}}^{2}{ 
\SigMinLam}\,{ \SigPluLam}-9216\,{{
 \PsiTwo}}^{2}{ \gamma}\,{ 
\pi}\,{ \deltaSigMinLam}-4608\,{{ 
\PsiTwo}}^{2}{ \SigMinLam}\,{ 
\pi}\,{ \deltaSigMinLam}+18432\,{ 
\PsiTwo}\,{{ \gamma}}^{3}{ 
\DeltaPsiTwo}+9216\,{ \PsiTwo}\,{{
 \gamma}}^{2}{ \DeltaPsiTwo}\,{
 \SigPluLam}-36864\,{ \PsiTwo}\,{
 \gamma}\,{ \DeltaPsiTwo}\,{{ 
\pi}}^{2}-9216\,{ \PsiTwo}\,{ 
\gamma}\,{ \SigMinLam}\,{ 
\pi}\,{ \deltaPsiTwo}-9216\,{ 
\PsiTwo}\,{ \gamma}\,{ 
\SigPluLam}\,{ \pi}\,{ 
\deltaPsiTwo}+4608\,{ \PsiTwo}\,{ 
\gamma}\,{{ \SigMinLam}}^{2}{ 
\DeltaPsiTwo}+2304\,{ \PsiTwo}\,{ 
\gamma}\,{ \SigMinLam}\,{ 
\DeltaPsiTwo}\,{ \SigPluLam}-2304\,{
 \PsiTwo}\,{ \gamma}\,{ 
\DeltaPsiTwo}\,{{ \SigPluLam}}^{2}-
9216\,{ \PsiTwo}\,{{ \SigMinLam}}^
{2}{ \pi}\,{ \deltaPsiTwo}+4608\,{
 \PsiTwo}\,{ \SigMinLam}\,{ 
\SigPluLam}\,{ \pi}\,{ 
\deltaPsiTwo}+2304\,{ \PsiTwo}\,{{
 \SigMinLam}}^{3}{ \DeltaPsiTwo}-
1152\,{ \PsiTwo}\,{{ \SigMinLam}}^
{2}{ \DeltaPsiTwo}\,{ \SigPluLam}-
1152\,{ \PsiTwo}\,{ \SigMinLam}\,{
 \DeltaPsiTwo}\,{{ \SigPluLam}}^{2
}+288\,{\Lambda}^{2}{{ \PsiTwo}}^{2}-3456\,\Lambda\,
{ \gamma}\,{ \PsiTwo}\,{ 
\DeltaPsiTwo}+1152\,\Lambda\,{ 
\PsiTwo}\,{ \deltaPsiTwo}\,{ 
\pi}-1152\,\Lambda\,{ \PsiTwo}\,{ 
\SigMinLam}\,{ \DeltaPsiTwo}+9216\,{
 \PsiTwo}\,{{ \gamma}}^{2}{ 
\DeltaDeltaPsiTwo}+18432\,{ \PsiTwo}
\,{ \gamma}\,{ \pi}\,{ 
\DeltadeltaPsiTwo}+9216\,{ \PsiTwo}\,
{ \gamma}\,{ \SigMinLam}\,{ 
\DeltaDeltaPsiTwo}+2304\,{ \PsiTwo}\,
{ \gamma}\,{ \SigMinLam}\,{ 
\deltadeltaPsiTwo}-4608\,{ \PsiTwo}\,
{ \gamma}\,{ \SigPluLam}\,{ 
\DeltaDeltaPsiTwo}-2304\,{ \PsiTwo}\,
{ \gamma}\,{ \SigPluLam}\,{ 
\deltadeltaPsiTwo}+4608\,{ \PsiTwo}\,
{ \DeltaPsiTwo}\,{ \pi}\,{ 
\deltaSigMinLam}+1152\,{ \PsiTwo}\,{{
 \SigMinLam}}^{2}{ 
\deltadeltaPsiTwo}+2304\,{ \PsiTwo}\,
{ \SigMinLam}\,{ \SigPluLam}\,{
 \DeltaDeltaPsiTwo}-1152\,{ 
\PsiTwo}\,{ \SigMinLam}\,{ 
\SigPluLam}\,{ \deltadeltaPsiTwo}+
2304\,{{ \gamma}}^{2}{{ 
\DeltaPsiTwo}}^{2}-4608\,{ \gamma}\,{
 \DeltaPsiTwo}\,{ \pi}\,{ 
\deltaPsiTwo}+4608\,{ \gamma}\,{ 
\SigMinLam}\,{{ \DeltaPsiTwo}}^{2}+
4608\,{{ \DeltaPsiTwo}}^{2}{{ \pi}
}^{2}+2304\,{{ \pi}}^{2}{{ 
\deltaPsiTwo}}^{2}+4608\,{ \SigMinLam
}\,{ \DeltaPsiTwo}\,{ \pi}\,{ 
\deltaPsiTwo}+1152\,{{ \SigMinLam}}^{
2}{{ \DeltaPsiTwo}}^{2}+576\,{{ 
\DeltaPsiTwo}}^{2}{{ \SigPluLam}}^{2}
+576\,\Lambda\,{ \PsiTwo}\,{ 
\DeltaDeltaPsiTwo}-2304\,{ \PsiTwo}\,
{ \DeltadeltaPsiTwo}\,{ 
\deltaSigMinLam}-6912\,{ \gamma}\,{
 \DeltaPsiTwo}\,{ 
\DeltaDeltaPsiTwo}-4608\,{ \gamma}\,{
 \deltaPsiTwo}\,{ 
\DeltadeltaPsiTwo}-4608\,{ 
\DeltaPsiTwo}\,{ \pi}\,{ 
\DeltadeltaPsiTwo}+2304\,{ \pi}\,{
 \deltaPsiTwo}\,{ 
\DeltaDeltaPsiTwo}-2304\,{ \SigMinLam
}\,{ \DeltaPsiTwo}\,{ 
\DeltaDeltaPsiTwo}+1152\,{ 
\DeltaPsiTwo}\,{ \SigPluLam}\,{ 
\deltadeltaPsiTwo}-4608\,{ \SigMinLam
}\,{ \deltaPsiTwo}\,{ 
\DeltadeltaPsiTwo}+576\,{{ 
\DeltaDeltaPsiTwo}}^{2}+1152\,{{ 
\DeltadeltaPsiTwo}}^{2}+576\,{{ 
\deltadeltaPsiTwo}}^{2}
\end{dmath}

\begin{dmath}
	I_{8,d,3}
	=
	R^{;pq} \Box R_{pq}
	=
-4608\,{{ \gamma}}^{3}{{ \PsiTwo}}
^{2}{ \SigMinLam}-4608\,{{ \gamma}
}^{3}{{ \PsiTwo}}^{2}{ \SigPluLam}
-9216\,{{ \PsiTwo}}^{2}{{ \gamma}}
^{2}{{ \pi}}^{2}+5760\,{{ \PsiTwo}
}^{2}{{ \gamma}}^{2}{{ \SigMinLam}
}^{2}-18432\,{{ \PsiTwo}}^{2}{{ 
\gamma}}^{2}{ \SigMinLam}\,{ 
\SigPluLam}+8064\,{{ \PsiTwo}}^{2}{{
 \gamma}}^{2}{{ \SigPluLam}}^{2}+
9216\,{{ \PsiTwo}}^{2}{ \gamma}\,{
 \SigMinLam}\,{{ \pi}}^{2}-1152\,{
{ \PsiTwo}}^{2}{ \gamma}\,{{ 
\SigMinLam}}^{2}{ \SigPluLam}-1152\,{
{ \PsiTwo}}^{2}{ \gamma}\,{ 
\SigMinLam}\,{{ \SigPluLam}}^{2}-2304
\,{{ \PsiTwo}}^{2}{{ \SigMinLam}}^
{2}{{ \pi}}^{2}+1440\,{{ \PsiTwo}}
^{2}{{ \SigMinLam}}^{4}-1152\,{{ 
\PsiTwo}}^{2}{{ \SigMinLam}}^{3}{ 
\SigPluLam}-288\,{{ \PsiTwo}}^{2}{{
 \SigMinLam}}^{2}{{ \SigPluLam}}^{
2}+2304\,\Lambda\,{{ \gamma}}^{2}{{ 
\PsiTwo}}^{2}+3744\,\Lambda\,{ \gamma
}\,{{ \PsiTwo}}^{2}{ \SigMinLam}-
2592\,\Lambda\,{ \gamma}\,{{ 
\PsiTwo}}^{2}{ \SigPluLam}-144\,
\Lambda\,{{ \PsiTwo}}^{2}{{ 
\SigMinLam}}^{2}+432\,\Lambda\,{{ 
\PsiTwo}}^{2}{ \SigMinLam}\,{ 
\SigPluLam}+4608\,{{ \PsiTwo}}^{2}{
 \gamma}\,{ \pi}\,{ 
\deltaSigMinLam}-2304\,{{ \PsiTwo}}^{
2}{ \SigMinLam}\,{ \pi}\,{ 
\deltaSigMinLam}+9216\,{ \PsiTwo}\,{{
 \gamma}}^{3}{ \DeltaPsiTwo}+13824
\,{{ \gamma}}^{2}{ \pi}\,{ 
\deltaPsiTwo}\,{ \PsiTwo}-9216\,{ 
\DeltaPsiTwo}\,{{ \gamma}}^{2}{ 
\PsiTwo}\,{ \SigMinLam}+13824\,{ 
\PsiTwo}\,{{ \gamma}}^{2}{ 
\DeltaPsiTwo}\,{ \SigPluLam}+16128\,{
 \PsiTwo}\,{ \gamma}\,{ 
\SigMinLam}\,{ \pi}\,{ 
\deltaPsiTwo}-6912\,{ \PsiTwo}\,{ 
\gamma}\,{ \SigPluLam}\,{ 
\pi}\,{ \deltaPsiTwo}-4608\,{ 
\PsiTwo}\,{ \gamma}\,{{ 
\SigMinLam}}^{2}{ \DeltaPsiTwo}+11520
\,{ \PsiTwo}\,{ \gamma}\,{ 
\SigMinLam}\,{ \DeltaPsiTwo}\,{ 
\SigPluLam}-3456\,{ \PsiTwo}\,{ 
\gamma}\,{ \DeltaPsiTwo}\,{{ 
\SigPluLam}}^{2}-4608\,{ \PsiTwo}\,{{
 \SigMinLam}}^{2}{ \pi}\,{ 
\deltaPsiTwo}+1152\,{ \PsiTwo}\,{ 
\SigMinLam}\,{ \SigPluLam}\,{ 
\pi}\,{ \deltaPsiTwo}+576\,{ 
\PsiTwo}\,{ \SigMinLam}\,{ 
\DeltaPsiTwo}\,{{ \SigPluLam}}^{2}+
144\,{\Lambda}^{2}{{ \PsiTwo}}^{2}-2592\,\Lambda\,{
 \gamma}\,{ \PsiTwo}\,{ 
\DeltaPsiTwo}+864\,\Lambda\,{ \PsiTwo
}\,{ \deltaPsiTwo}\,{ \pi}-864\,
\Lambda\,{ \PsiTwo}\,{ \SigMinLam}
\,{ \DeltaPsiTwo}+432\,\Lambda\,{ 
\PsiTwo}\,{ \DeltaPsiTwo}\,{ 
\SigPluLam}-576\,{{ \deltaSigMinLam}}
^{2}{{ \PsiTwo}}^{2}+4608\,{ 
\PsiTwo}\,{{ \gamma}}^{2}{ 
\DeltaDeltaPsiTwo}+2304\,{ 
\deltadeltaPsiTwo}\,{{ \gamma}}^{2}{
 \PsiTwo}+4608\,{ \PsiTwo}\,{ 
\gamma}\,{ \SigMinLam}\,{ 
\DeltaDeltaPsiTwo}+4608\,{ \PsiTwo}\,
{ \gamma}\,{ \SigMinLam}\,{ 
\deltadeltaPsiTwo}-3456\,{ \PsiTwo}\,
{ \gamma}\,{ \SigPluLam}\,{ 
\DeltaDeltaPsiTwo}-3456\,{ \PsiTwo}\,
{ \gamma}\,{ \SigPluLam}\,{ 
\deltadeltaPsiTwo}-1152\,{ 
\deltaSigMinLam}\,{ \gamma}\,{ 
\deltaPsiTwo}\,{ \PsiTwo}+576\,{ 
\PsiTwo}\,{{ \SigMinLam}}^{2}{ 
\deltadeltaPsiTwo}+576\,{ \PsiTwo}\,{
 \SigMinLam}\,{ \SigPluLam}\,{ 
\DeltaDeltaPsiTwo}+576\,{ \PsiTwo}\,{
 \SigMinLam}\,{ \SigPluLam}\,{ 
\deltadeltaPsiTwo}-2304\,{ 
\deltaSigMinLam}\,{ \deltaPsiTwo}\,{
 \PsiTwo}\,{ \SigMinLam}+1152\,{{
 \gamma}}^{2}{{ \DeltaPsiTwo}}^{2}
-6912\,{ \gamma}\,{ \DeltaPsiTwo}
\,{ \pi}\,{ \deltaPsiTwo}+4608\,{
 \gamma}\,{ \SigMinLam}\,{{ 
\DeltaPsiTwo}}^{2}-3456\,{{ 
\DeltaPsiTwo}}^{2}{ \gamma}\,{ 
\SigPluLam}-2304\,{ \gamma}\,{{ 
\deltaPsiTwo}}^{2}{ \SigMinLam}+1152
\,{{ \pi}}^{2}{{ \deltaPsiTwo}}^{2
}-2304\,{ \SigMinLam}\,{ 
\DeltaPsiTwo}\,{ \pi}\,{ 
\deltaPsiTwo}+1152\,{ \DeltaPsiTwo}\,
{ \pi}\,{ \deltaPsiTwo}\,{ 
\SigPluLam}+1152\,{{ \SigMinLam}}^{2}
{{ \DeltaPsiTwo}}^{2}-1152\,{{ 
\DeltaPsiTwo}}^{2}{ \SigMinLam}\,{
 \SigPluLam}+288\,{{ \DeltaPsiTwo}
}^{2}{{ \SigPluLam}}^{2}-2304\,{{ 
\deltaPsiTwo}}^{2}{{ \SigMinLam}}^{2}
+432\,\Lambda\,{ \PsiTwo}\,{ 
\DeltaDeltaPsiTwo}+432\,\Lambda\,{ 
\PsiTwo}\,{ \deltadeltaPsiTwo}-3456\,
{ \gamma}\,{ \DeltaPsiTwo}\,{ 
\DeltaDeltaPsiTwo}-3456\,{ 
\deltadeltaPsiTwo}\,{ \DeltaPsiTwo}\,
{ \gamma}+1152\,{ \pi}\,{ 
\deltaPsiTwo}\,{ \DeltaDeltaPsiTwo}+
1152\,{ \deltadeltaPsiTwo}\,{ \pi}
\,{ \deltaPsiTwo}-1152\,{ 
\SigMinLam}\,{ \DeltaPsiTwo}\,{ 
\DeltaDeltaPsiTwo}-1152\,{ 
\deltadeltaPsiTwo}\,{ \DeltaPsiTwo}\,
{ \SigMinLam}+576\,{ \DeltaPsiTwo}
\,{ \DeltaDeltaPsiTwo}\,{ 
\SigPluLam}+576\,{ \DeltaPsiTwo}\,{
 \SigPluLam}\,{ \deltadeltaPsiTwo}
+288\,{{ \DeltaDeltaPsiTwo}}^{2}+576\,{ 
\deltadeltaPsiTwo}\,{ 
\DeltaDeltaPsiTwo}+288\,{{ 
\deltadeltaPsiTwo}}^{2}
\end{dmath}

\section{A Special Case}
\label{sec:SpecialCase}

It is illustrative to demonstrate the above relationships by examining a special case of the 3D Szekeres spacetimes. We first motivate our choice of form for the arbitrary function $S(y)$ and those appearing in $\hat{\nu}(x,y) = -\ln(A(y) x^2 + B(y) x + C(y))$.

\subsection{$S(y)$}

We desire a special case in which the Cartan invariants take on a simpler (but nondegenerate) form. They can then be more easily compared with the polynomial scalar curvature invariants explicitly.

$F_{,x}=0$ for our Szekeres metrics, so (as in \eqref{BarrowE}) equations (87)-(88) of Barrow et al. become
\begin{equation}
	\hat{\kappa} \hat{\rho}
	= \frac{E(x,y; \Lambda)}{R(R_{,y} + R \nu_{,y})}
\end{equation}
where
\begin{equation}
	E(x,y; \Lambda) = e^{-2\nu}[\nu_{,xy} \nu_{,x} - \nu_{,xxy} ]
	+ \frac{1}{2} e^{-2\nu}(K e^{2\nu})_{,y}
\end{equation}
and
\begin{equation}
	K(x,y,t) = \dot{R}^2 - \Lambda R^2 - 2 (SF_{,y})^2.
\end{equation}
Defining $S(y)$ to be such that $K$ vanishes, that is,
\begin{equation}
	\label{Sassumption}
	S^2 = \frac{\dot{R}^2 - \Lambda R^2}{ 2 {F_{,y}}^2}  = \frac{-\Lambda + F^2}{ 2 {F_{,y}}^2},
\end{equation}
we have 
\begin{equation}
	\Psi_2
	= \frac{1}{12} \hat{\kappa} \hat{\rho}
	= \frac{1}{12} \frac{e^{-2\nu}[\nu_{,xy} \nu_{,x} - \nu_{,xxy} ]}{R(R_{,y} + R \nu_{,y})},
\end{equation}
which is simpler. Note that due to a differential identity of $R(t,y)$, \eqref{Sassumption} is in fact not a function of $t$. Also note that we must assume $-\Lambda + F^2 \gneq 0$.

\subsection{$\hat{\nu}(x,y)$}

We have 
\begin{equation}
 \hat{\nu} = -\ln(A(y) x^2 + B(y) x + C(y)).
\end{equation}
Choosing only $B(y)=0$ and $C(y)=c$ constant simplifies the derivatives of $\hnu$ without causing the 
$\Psi_2$ above to vanish.
This implies that only two arbitrary functions remain: $A(y)$ and $F(y)$.

\subsection{Extended Cartan Invariants}

After making these choices, the Cartan invariants take the forms $\Lambda$ constant and:
\begin{dmath}
	\Psi_2 
	=
1/6\,{\frac {A_{{y}}c \left( A{x}^{2}-c \right) }{R \left( -R_{{y}}A{x
}^{2}+RA_{{y}}{x}^{2}-R_{{y}}c \right) }}
\end{dmath}
\begin{dmath}
	\gamma 
	=
-1/4\,{\frac {R_{{t}}}{R}}
\end{dmath}
\begin{dmath}
	\frac{\sigma-\lambda}{2}
	=
-1/4\,{\frac {-R_{{t,y}}A{x}^{2}+R_{{t}}A_{{y}}{x}^{2}-R_{{t,y}}c}{-R_
{{y}}A{x}^{2}+RA_{{y}}{x}^{2}-R_{{y}}c}}
\end{dmath}
\begin{dmath}
	\delta\Psi_2
	=
	\frac{-\sqrt {-\Lambda\,{R}^{2}+{R_{{t}}}^{2}}c \left( A{x}^{2}+c \right)}{6\, \left( -R_{{y}}A{x}^{2}+RA_{{y}}{x}^{2}-R_{{y}}c \right) ^{3}{R}^{
2}
}
 \Bigl( A_{{y}}{A}^{2}R_{{y,y
}}R{x}^{4}-{A}^{2}R_{{y}}A_{{y,y}}R{x}^{4}+A_{{y}}{A}^{2}{R_{{y}}}^{2}
{x}^{4}-2\,{A_{{y}}}^{2}AR_{{y}}R{x}^{4}+{A_{{y}}}^{3}{R}^{2}{x}^{4}-A
_{{y}}R_{{y,y}}R{c}^{2}+R_{{y}}A_{{y,y}}R{c}^{2}-A_{{y}}{R_{{y}}}^{2}{
c}^{2} \Bigr)  
\end{dmath}
\begin{dmath}
	\Delta\Psi_2
	=
	\frac{-A_{{y}}c }{12\,{R}^{2} \left( -R_{{y}}A{x}^{2}+RA_{{y}}{x}^{2}-R_{{y}}c \right) ^{2}}
\Bigl( -R_{{t,y}}R{A}^{2}{x}^{4}-R_{{t}}R_{{y}}{A}^{2}{x}^{4
}+2\,A_{{y}}R_{{t}}RA{x}^{4}+4\,R_{{y}}{A}^{2}c{x}^{3}-2\,A_{{y}}RAc{x
}^{3}-2\,A_{{y}}R_{{t}}Rc{x}^{2}+4\,R_{{y}}A{c}^{2}x-2\,A_{{y}}R{c}^{2
}x+R_{{t,y}}R{c}^{2}+R_{{t}}R_{{y}}{c}^{2} \Bigr) 
\end{dmath}
\begin{dmath}
	\pi
	=
1/4\,{\frac {\sqrt {-\Lambda\,{R}^{2}+{R_{{t}}}^{2}}}{R}}
\end{dmath}
\begin{dmath}
	\frac{\sigma+\lambda}{2}
	=
1/2\,{\frac {A_{{y}}xc}{-R_{{y}}A{x}^{2}+RA_{{y}}{x}^{2}-R_{{y}}c}}
\end{dmath}
\begin{dmath}
	\delta(\sigma-\lambda)
	=
	\frac{-\sqrt {-\Lambda\,{R}^{2}+{R_{{t}}}^{2}}\left( A{x}^{2}+c \right) }{4\, \left( -R_{{y}}A{x}^{2}+RA_{{y}}{x}^{2}-R_{{y}}c \right) ^{3}}
 \Bigl( -{A}^{2}R_{{y}}R_{{y,t
,y}}{x}^{4}+R_{{t,y}}{A}^{2}R_{{y,y}}{x}^{4}+RA_{{y}}AR_{{y,t,y}}{x}^{
4}-R_{{t,y}}RAA_{{y,y}}{x}^{4}-R_{{t}}A_{{y}}AR_{{y,y}}{x}^{4}+R_{{t}}
AR_{{y}}A_{{y,y}}{x}^{4}-2\,AR_{{y}}R_{{y,t,y}}c{x}^{2}+2\,R_{{t,y}}AR
_{{y,y}}c{x}^{2}+RA_{{y}}R_{{y,t,y}}c{x}^{2}-R_{{t,y}}RA_{{y,y}}c{x}^{
2}-R_{{t}}A_{{y}}R_{{y,y}}c{x}^{2}+R_{{t}}R_{{y}}A_{{y,y}}c{x}^{2}-R_{
{y}}R_{{y,t,y}}{c}^{2}+R_{{t,y}}R_{{y,y}}{c}^{2} \Bigr)  
\end{dmath}
\begin{dmath}
	\delta\delta\Psi_2
	=
	\text{196 terms}
\end{dmath}
\begin{dmath}
	\Delta\Delta\Psi_2
	=
	\frac{1}{24\,{R}^{3} \left( -R_{{y}}A{x}^{2}+RA_{{y}}{x}^{2}-R_{{y}}c \right) ^
{3}}
\Bigl(
-A_{{y}}c \left( 4\,{A}^{4}{R_{{y}}}^{2}c{x}^{6}-2\,{R}^{2}{A}^{3}{R_{
{t,y}}}^{2}{x}^{6}-6\,A_{{y}}R{A}^{3}R_{{y}}c{x}^{6}-2\,R{A}^{3}R_{{t}
}R_{{y}}R_{{t,y}}{x}^{6}+R_{{t,t}}R{A}^{3}{R_{{y}}}^{2}{x}^{6}-2\,{A}^
{3}{R_{{t}}}^{2}{R_{{y}}}^{2}{x}^{6}+2\,{A_{{y}}}^{2}{R}^{2}{A}^{2}c{x
}^{6}+6\,A_{{y}}{R}^{2}{A}^{2}R_{{t}}R_{{t,y}}{x}^{6}-3\,A_{{y}}R_{{t,
t}}{R}^{2}{A}^{2}R_{{y}}{x}^{6}+6\,A_{{y}}R{A}^{2}{R_{{t}}}^{2}R_{{y}}
{x}^{6}+2\,{A_{{y}}}^{2}R_{{t,t}}{R}^{3}A{x}^{6}-6\,{A_{{y}}}^{2}{R}^{
2}A{R_{{t}}}^{2}{x}^{6}+8\,R{A}^{3}R_{{y}}R_{{t,y}}c{x}^{5}+12\,{A}^{3
}R_{{t}}{R_{{y}}}^{2}c{x}^{5}-30\,A_{{y}}R{A}^{2}R_{{t}}R_{{y}}c{x}^{5
}+10\,{A_{{y}}}^{2}{R}^{2}AR_{{t}}c{x}^{5}+4\,{A}^{3}{R_{{y}}}^{2}{c}^
{2}{x}^{4}-2\,{R}^{2}{A}^{2}{R_{{t,y}}}^{2}c{x}^{4}-18\,A_{{y}}R{A}^{2
}R_{{y}}{c}^{2}{x}^{4}-2\,R{A}^{2}R_{{t}}R_{{y}}R_{{t,y}}c{x}^{4}+R_{{
t,t}}R{A}^{2}{R_{{y}}}^{2}c{x}^{4}-2\,{A}^{2}{R_{{t}}}^{2}{R_{{y}}}^{2
}c{x}^{4}+8\,{A_{{y}}}^{2}{R}^{2}A{c}^{2}{x}^{4}-2\,{A_{{y}}}^{2}R_{{t
,t}}{R}^{3}c{x}^{4}+6\,{A_{{y}}}^{2}{R}^{2}{R_{{t}}}^{2}c{x}^{4}+16\,R
{A}^{2}R_{{y}}R_{{t,y}}{c}^{2}{x}^{3}+24\,{A}^{2}R_{{t}}{R_{{y}}}^{2}{
c}^{2}{x}^{3}-8\,A_{{y}}{R}^{2}AR_{{t,y}}{c}^{2}{x}^{3}-32\,A_{{y}}RAR
_{{t}}R_{{y}}{c}^{2}{x}^{3}+10\,{A_{{y}}}^{2}{R}^{2}R_{{t}}{c}^{2}{x}^
{3}-4\,{A}^{2}{R_{{y}}}^{2}{c}^{3}{x}^{2}+2\,{R}^{2}A{R_{{t,y}}}^{2}{c
}^{2}{x}^{2}-10\,A_{{y}}RAR_{{y}}{c}^{3}{x}^{2}+2\,RAR_{{t}}R_{{y}}R_{
{t,y}}{c}^{2}{x}^{2}-R_{{t,t}}RA{R_{{y}}}^{2}{c}^{2}{x}^{2}+2\,A{R_{{t
}}}^{2}{R_{{y}}}^{2}{c}^{2}{x}^{2}+6\,{A_{{y}}}^{2}{R}^{2}{c}^{3}{x}^{
2}-6\,A_{{y}}{R}^{2}R_{{t}}R_{{t,y}}{c}^{2}{x}^{2}+3\,A_{{y}}R_{{t,t}}
{R}^{2}R_{{y}}{c}^{2}{x}^{2}-6\,A_{{y}}R{R_{{t}}}^{2}R_{{y}}{c}^{2}{x}
^{2}+8\,RAR_{{y}}R_{{t,y}}{c}^{3}x+12\,AR_{{t}}{R_{{y}}}^{2}{c}^{3}x-8
\,A_{{y}}{R}^{2}R_{{t,y}}{c}^{3}x-2\,A_{{y}}RR_{{t}}R_{{y}}{c}^{3}x-4
\,A{R_{{y}}}^{2}{c}^{4}+2\,{R}^{2}{R_{{t,y}}}^{2}{c}^{3}+2\,A_{{y}}RR_
{{y}}{c}^{4}+2\,RR_{{t}}R_{{y}}R_{{t,y}}{c}^{3}-R_{{t,t}}R{R_{{y}}}^{2
}{c}^{3}+2\,{R_{{t}}}^{2}{R_{{y}}}^{2}{c}^{3} \right) 
\Bigr)
\end{dmath}
\begin{dmath}
	\Delta\delta\Psi_2
	=
	\text{214 terms}
\end{dmath}

\subsection{Scalar polynomial invariants}

Likewise, the explicit polynomial scalar invariants are simplified. 
Let us simply present the two 0th order invariants from section~\ref{sec:ZerothOrderPolynomialInvariants}:
\begin{dmath}
	R
	=
	\frac{1}{R \left( -R_{{y}}A{x}^{2}+RA_{{y}}{x}^{2}-R_{{y}}c \right) }
\Bigl(
-\Lambda\,RAR_{{y}}{x}^{2}+\Lambda\,{R}^{2}A_{{y}}{x}^{2}+2\,A{x}^{2}A
_{{y}}c-R_{{t,t}}AR_{{y}}{x}^{2}+2\,A_{{y}}{x}^{2}RR_{{t,t}}-\Lambda\,
RR_{{y}}c-2\,A_{{y}}{c}^{2}-R_{{t,t}}R_{{y}}c
\Bigr)
\end{dmath}
\begin{dmath}
	R^{ab} R_{ab}
	=
	\frac{1}{2\,{R}^{2} \left( -R_{{y}}A{x}^{2}+A_{{y}}{x}^{2}R-R_{{y}}c \right) ^{2}}
\Bigl(
{\Lambda}^{2}{R}^{2}{A}^{2}{R_{{y}}}^{2}{x}^{4}-2\,{\Lambda}^{2}{R}^{3
}A_{{y}}AR_{{y}}{x}^{4}+{\Lambda}^{2}{R}^{4}{A_{{y}}}^{2}{x}^{4}-4\,
\Lambda\,RA_{{y}}{A}^{2}R_{{y}}c{x}^{4}+\Lambda\,R{A}^{2}{R_{{y}}}^{2}
R_{{t,t}}{x}^{4}+4\,\Lambda\,{R}^{2}{A_{{y}}}^{2}Ac{x}^{4}-3\,\Lambda
\,{R}^{2}A_{{y}}AR_{{y}}R_{{t,t}}{x}^{4}+2\,\Lambda\,{R}^{3}{A_{{y}}}^
{2}R_{{t,t}}{x}^{4}+4\,{A_{{y}}}^{2}{A}^{2}{c}^{2}{x}^{4}-2\,A_{{y}}{A
}^{2}R_{{y}}R_{{t,t}}c{x}^{4}+{A}^{2}{R_{{y}}}^{2}{R_{{t,t}}}^{2}{x}^{
4}+2\,{\Lambda}^{2}{R}^{2}A{R_{{y}}}^{2}c{x}^{2}+4\,R{A_{{y}}}^{2}AR_{
{t,t}}c{x}^{4}-3\,RA_{{y}}AR_{{y}}{R_{{t,t}}}^{2}{x}^{4}-2\,{\Lambda}^
{2}{R}^{3}A_{{y}}R_{{y}}c{x}^{2}+3\,{R}^{2}{A_{{y}}}^{2}{R_{{t,t}}}^{2
}{x}^{4}+2\,\Lambda\,RA{R_{{y}}}^{2}R_{{t,t}}c{x}^{2}-4\,\Lambda\,{R}^
{2}{A_{{y}}}^{2}{c}^{2}{x}^{2}-3\,\Lambda\,{R}^{2}A_{{y}}R_{{y}}R_{{t,
t}}c{x}^{2}-8\,{A_{{y}}}^{2}A{c}^{3}{x}^{2}+2\,A{R_{{y}}}^{2}{R_{{t,t}
}}^{2}c{x}^{2}+{\Lambda}^{2}{R}^{2}{R_{{y}}}^{2}{c}^{2}-4\,R{A_{{y}}}^
{2}R_{{t,t}}{c}^{2}{x}^{2}-3\,RA_{{y}}R_{{y}}{R_{{t,t}}}^{2}c{x}^{2}+4
\,\Lambda\,RA_{{y}}R_{{y}}{c}^{3}+\Lambda\,R{R_{{y}}}^{2}R_{{t,t}}{c}^
{2}+4\,{A_{{y}}}^{2}{c}^{4}+2\,A_{{y}}R_{{y}}R_{{t,t}}{c}^{3}+{R_{{y}}
}^{2}{R_{{t,t}}}^{2}{c}^{2}
\Bigr)
\end{dmath}
Using these expressions and those above, we verify that \eqref{eqn:ZerothOrderPSCIs} holds in the special case.
The higher order polynomial invariants can also be calculated in the special case. 
However, as one may imagine from the expressions for the 1st and 2nd order Cartan invariants above, the expressions are somewhat unwieldy.

\end{document}